\definecolor{since}{rgb}{0.5,0.5,0.5}
\definecolor{newred}{HTML}{ED2024}
\definecolor{newgreen}{HTML}{109A48}
\definecolor{newblue}{HTML}{535DAA}
\definecolor{neworange}{HTML}{F79420}
\renewcommand*\env@matrix[1][*\c@MaxMatrixCols c]{%
  \hskip -\arraycolsep
  \let\@ifnextchar\new@ifnextchar
  \array{#1}}
\setlist{itemsep=0mm}
\newclass{\QPCP}{QPCP}
\newclass{\QCPCP}{QCPCP}
\newclass{\QCMAcomp}{QCMA-complete}
\newclass{\sharpP}{\#P}
\newcommand{\defeq}{\vcentcolon=}
\newtheorem{theorem}{Theorem}%[section] %uncomment to include section numbers in the numbering
\newtheorem*{theorem*}{Theorem}
\newtheorem*{proposition*}{Proposition}
\newtheorem{fact}[theorem]{Fact}
\newtheorem*{fact*}{Fact}
\newtheorem{lemma}[theorem]{Lemma}
\newtheorem*{lemma*}{Lemma}
\newtheorem{claim}[theorem]{Claim}
\newtheorem*{conjecture*}{Conjecture}
\theoremstyle{definition}
\newtheorem{definition}[theorem]{Definition}
\newtheorem*{definition*}{Definition}
\theoremstyle{remark}
\newtheorem*{remark*}{Remark}
\newcommand{\mcH}{\mathcal{H}}
\DeclareMathOperator\eye{\mathbb{I}}
\newcommand\restr[2]{{% we make the whole thing an ordinary symbol
  \left.\kern-\nulldelimiterspace % automatically resize the bar with \right
  #1 % the function
  % pretend it's a little taller at normal size
  \right|_{#2} % this is the delimiter
  }}
\newcommand{\jnote}[1]{}
\newcommand*\centertile{\includegraphics[width=.8em]{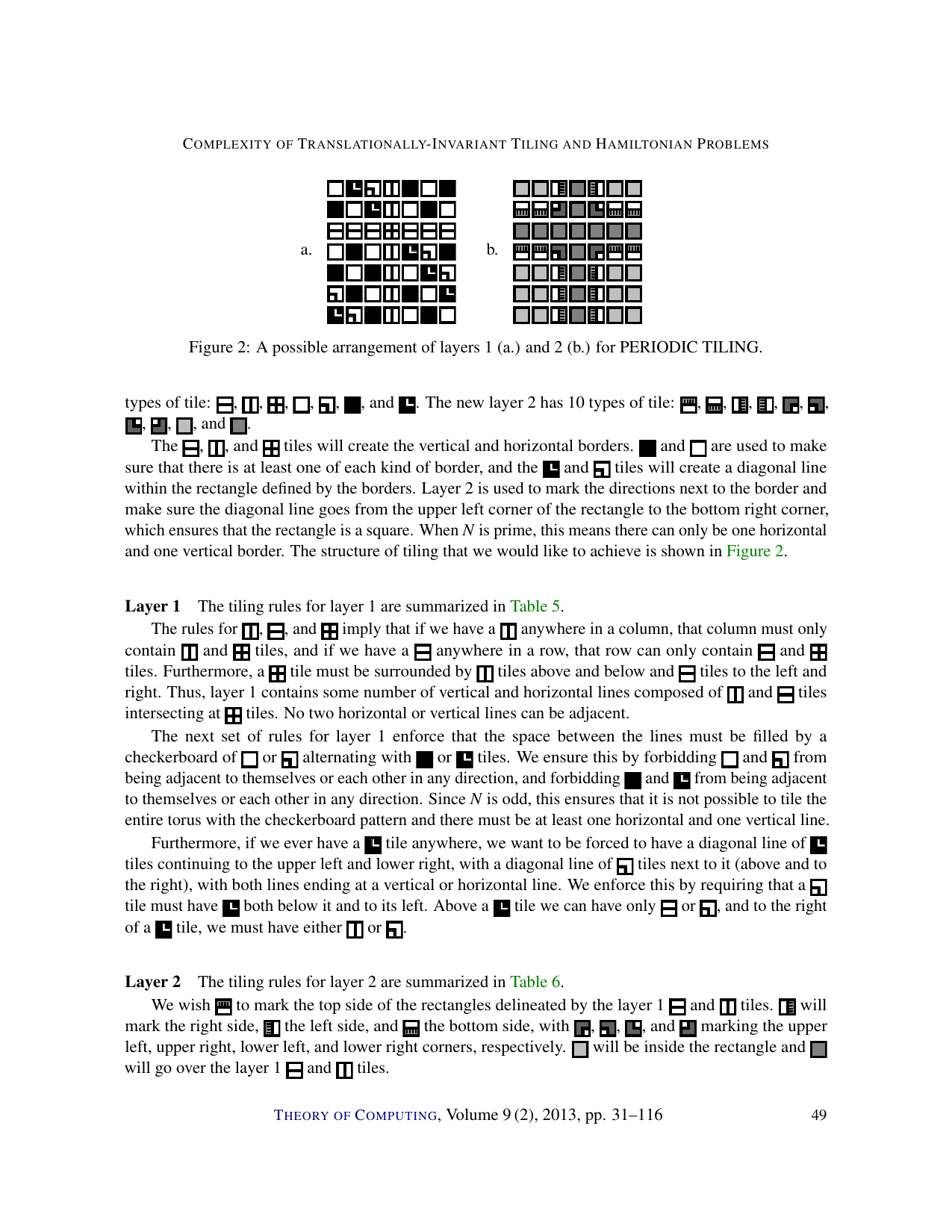}}
\newcommand*\horizontaltile{\includegraphics[width=.8em]{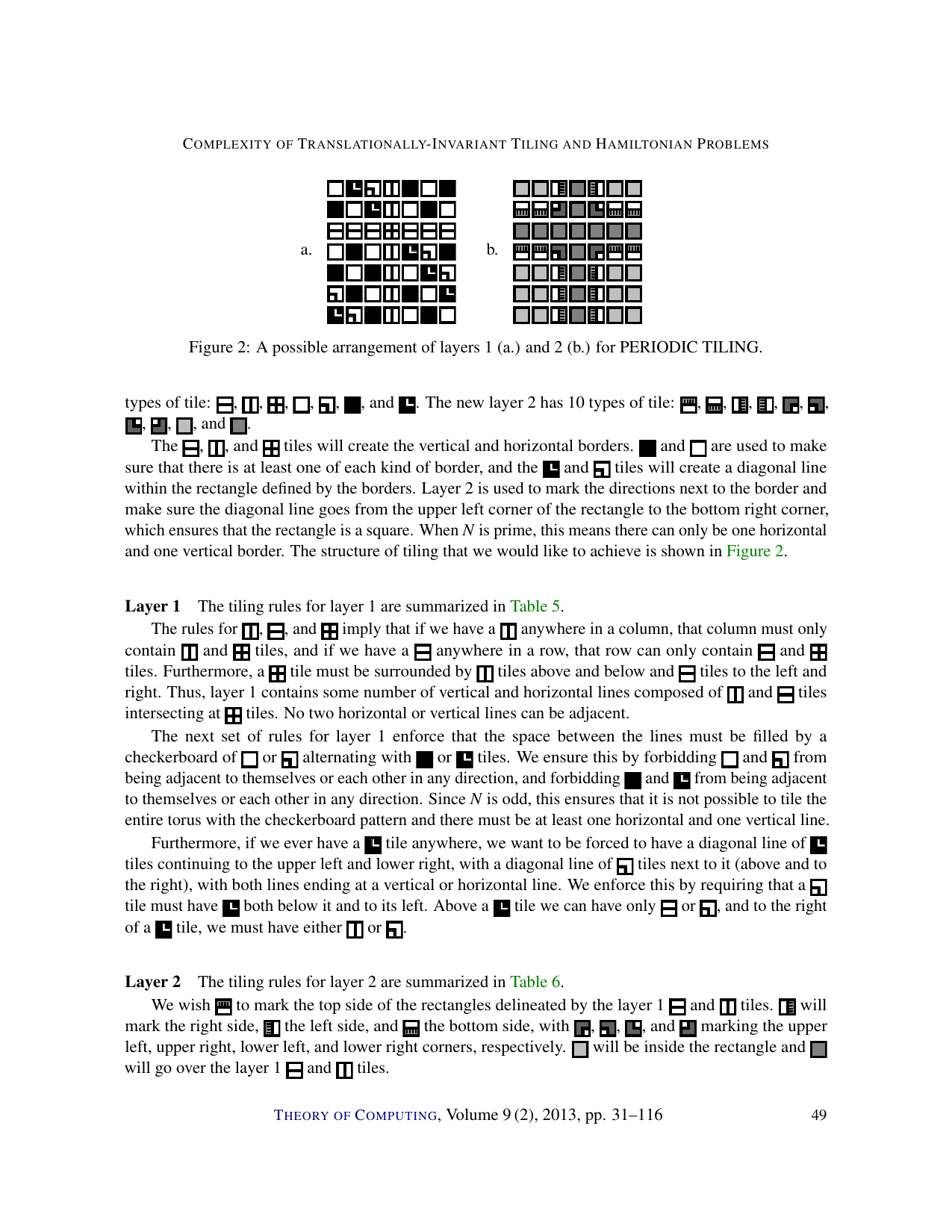}}
\newcommand*\bottomtile{\includegraphics[width=.8em]{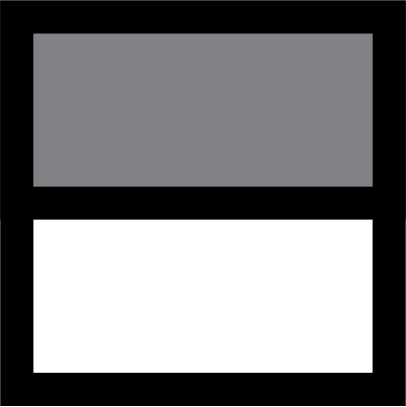}}
\newcommand*\bottomlefttile{\includegraphics[width=.8em]{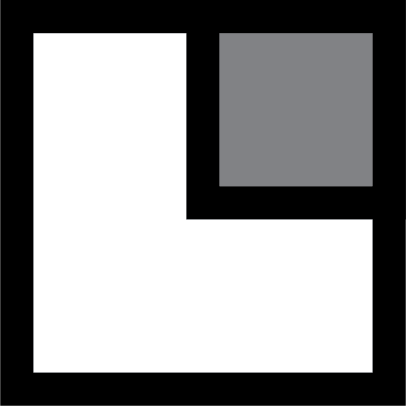}}
\newcommand*\bottomrighttile{\includegraphics[width=.8em]{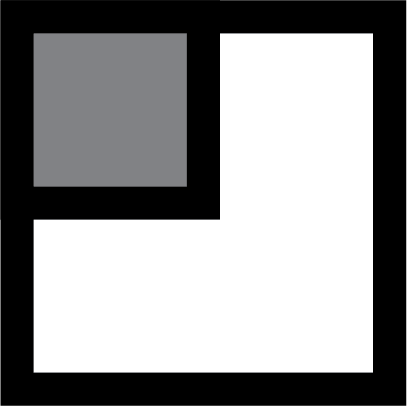}}
\newcommand*\middletile{\includegraphics[width=.8em]{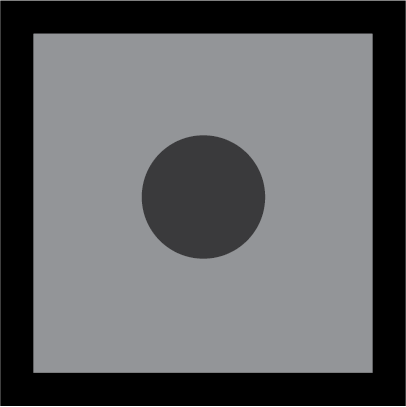}}
\newcommand*\lefttile{\includegraphics[width=.8em]{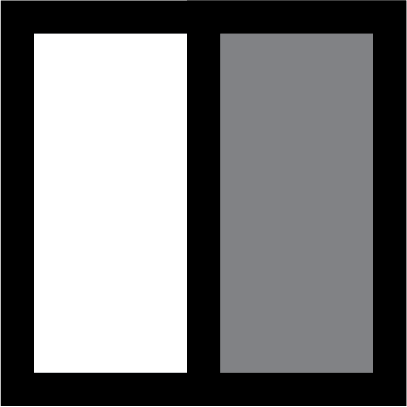}}
\newcommand*\righttile{\includegraphics[width=.8em]{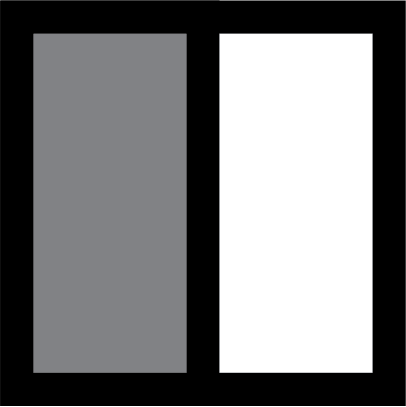}}
\newcommand*\toptile{\includegraphics[width=.8em]{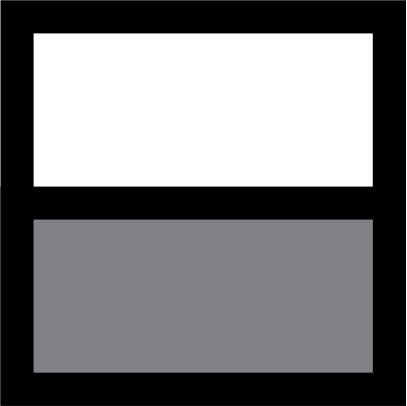}}
\newcommand*\bottombc{\includegraphics[width=.8em]{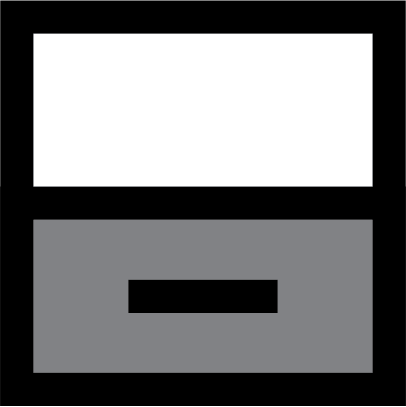}}
\newcommand*\leftbc{\includegraphics[width=.8em]{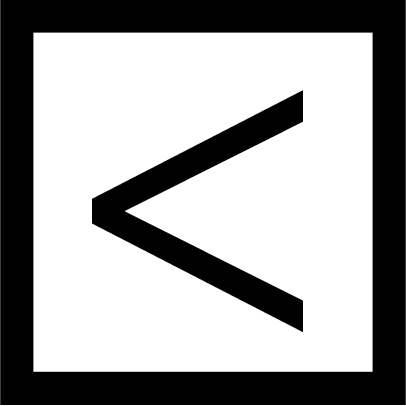}}
\newcommand*\rightbc{\includegraphics[width=.8em]{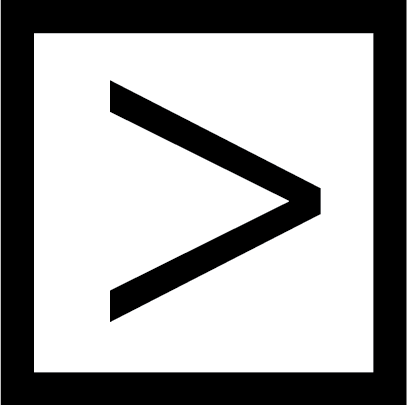}}
\newcommand*\blank{\includegraphics[width=.8em]{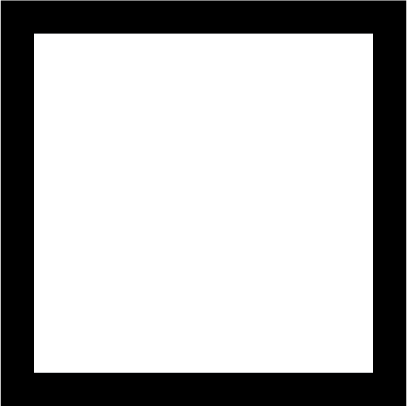}}
\newcommand*\toplefttile{\includegraphics[width=.8em]{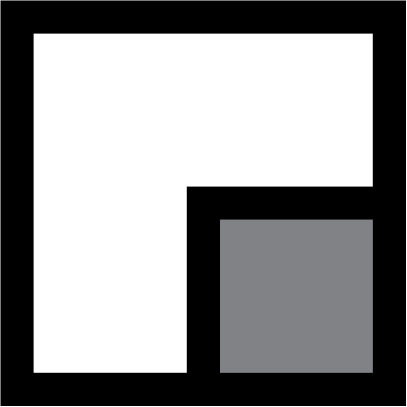}}
\newcommand*\toprighttile{\includegraphics[width=.8em]{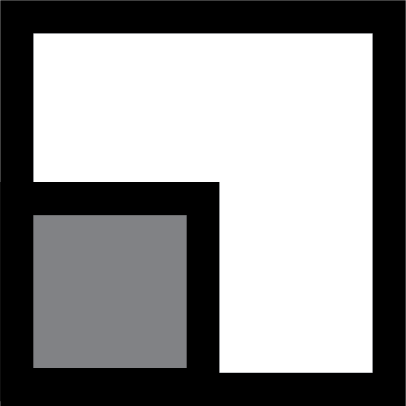}}
\newcommand{\qmaexp}{\text{QMA}_{\text{EXP}}}
\newcommand{\qcmaexp}{\text{QCMA}_{\text{EXP}}}
\title{\vspace{-2.5em}The rotation-invariant Hamiltonian problem is $\qmaexp$-complete}
\author[1]{Jon Nelson \footnote{\href{mailto:nelson1@umd.edu}{nelson1@umd.edu}}}
\author[1]{Daniel Gottesman}
\affil[1]{Joint Center for Quantum Information and Computer Science (QuICS), University of Maryland}
\affil{Department of Computer Science, University of Maryland}
\date{}
\begin{document}

\maketitle

\vspace{-3em}
\begin{abstract}
In this work we study a variant of the local Hamiltonian problem where we restrict to Hamiltonians that live on a lattice and are invariant under translations and rotations of the lattice. In the one-dimensional case this problem is known to be $\qmaexp$-complete. On the other hand, if we fix the lattice length then in the high-dimensional limit the ground state becomes unentangled due to arguments from mean-field theory. We take steps towards understanding this complexity spectrum by studying a problem that is intermediate between these two extremes. Namely, we consider the regime where the lattice dimension is arbitrary but fixed and the lattice length is scaled. We prove that this rotation-invariant Hamiltonian problem is $\qmaexp$-complete answering an open question of \cite{v009a002}. This characterizes a broad parameter range in which these rotation-invariant Hamiltonians have high computational complexity.
\end{abstract}

\section{Introduction}
\label{sec:intro}
In order to understand the behavior of a quantum many-body system, it is crucial to study its Hamiltonian. The Hamiltonian operator not only governs the system's dynamics through the Schr{\"o}dinger equation but also encodes its low-energy states and energy spectrum. It is thus important to understand which Hamiltonians are tractable to analyze. In this work, we study the computational complexity of estimating the ground-state energy of a Hamiltonian with only short-range interactions, which is known as the local Hamiltonian problem. 

Often the goal is to show that a specific variant of this problem is QMA-complete, which implies that for certain Hamiltonians not even a quantum computer can be expected to find its ground state energy. On the one hand, this is a negative result for being able to calculate ground state energies. On the other hand, these results lead to constructions of highly complex quantum systems that are interesting objects of study in their own right. 

Kitaev initiated the study of local Hamiltonian problems in his landmark result proving that this problem in its most general form is QMA-complete \cite{kitaev}. However, Kitaev's result only applies for a worst-case family of Hamiltonians, which are not physically natural. In order to study the complexity of more physically relevant cases, subsequent work has extended Kitaev's result to apply under additional constraints that capture what it means to be ``natural''. This has included restricting the local dimension of each particle as well as constraining the geometry of the interactions. For example, \cite{oliveira} extends Kitaev's result to qubits on a 2D lattice while \cite{Aharonov_2009} further restricts to particles on a line with constant local dimension. 

Additional follow-up work has also emphasized symmetry constraints. This is motivated by the observation that many systems in nature are highly symmetric. For instance, the laws of gravity, electromagnetism, etc., do not change depending on where you are or how you are oriented; thus, these laws are translation and rotation-invariant. For translation-invariant one-dimensional spin chains, \cite{v009a002} showed that this local Hamiltonian problem is $\qmaexp$-complete.

Although much work is now known about the complexity of translation-invariant systems \cite{v009a002,Bausch_2017,Bausch_2021,2019JSP...176..228K,piddock2020universaltranslationallyinvarianthamiltonians}, there have been very few results for the rotation-invariant case. In fact, it was posed as an open question of \cite{v009a002} whether their results can be extended to rotation-invariant Hamiltonians in higher dimensions. In this work, we solve this question and hope similar techniques can be used to lift other translation-invariant results to the rotation-invariant case.

Translation invariance with reflection symmetry and rotation invariance coincide in 1D and so the main challenge is to extend \cite{v009a002} to Hamiltonians on higher dimensional lattices. It is trivial to extend their result to higher dimensional translation-invariant lattices simply by ignoring all but one dimension. However, this breaks the rotation symmetry, which requires that the Hamiltonian terms act identically in all directions. In this case, the key challenge is to handle the increasingly high degree of interaction without increasing the number of parameters in the Hamiltonian. This presents issues, for example, when attempting to encode computation into the Hamiltonian's ground state, which is an essential step for proving hardness. Controlling this computation requires the ability to track time, which can be accomplished in the 1D setting by moving a clock pointer along the spin chain \cite{v009a002}. However, this same idea cannot be used in higher dimensions since the paths can branch in many directions throughout the lattice. In order to pick out a specific time direction, we must engineer a family of Hamiltonians that spontaneously breaks the rotation symmetry.

Technical difficulties aside, it may seem intuitive that increasing the lattice dimension only makes the local Hamiltonian problem more difficult, and so one might assume that the complexity for higher dimensional cases follows from the one-dimensional case. However, due to the rotation symmetry, the increase in lattice dimension does not correspond to more Hamiltonian parameters, and so it is unclear how the complexity actually compares. In fact, standard condensed matter arguments imply that increasing the dimension can instead make the problem easier. This follows from the observation that for higher-dimensional lattices, mean-field theory (which uses a product-state ansatz to approximate the ground state) becomes more and more accurate \cite{https://doi.org/10.1002/andp.19945060606}. In the quantum setting, this can be explained by an effect called monogamy of entanglement \cite{Terhal_2004}, which states that a particle cannot be highly entangled with many other particles. Thus, for high lattice dimension, each particle has many neighbors and so on average they must be nearly unentangled. Due to this effect, the product state becomes a good approximation of the ground state, suggesting that this problem could now be more tractable than the lower dimensional cases.

This has been formalized in \cite{brandaoharrow} which shows that for lattice dimension $r$ there is a product state that approximates the ground-state energy by an average error of $O(r^{-1/3})$ per term of the Hamiltonian. Furthermore, \cite{Kraus_2013} rigorously show that in the limit as $r \rightarrow \infty$ the ground state is exactly a product state when the Hamiltonian is translation and rotation invariant. These results suggest that if the lattice dimension is high enough, the problem loses its quantum hardness since the low-energy states become unentangled. Another result that captures this phenomenon is \cite{aharonov2013commutinglocalhamiltoniansexpanders}, who show that a commuting version of the local Hamiltonian problem becomes easier as the interaction graphs become more expanding, which intuitively corresponds to more interaction.

In this work, we consider a lattice dimension that is in an intermediate regime between one-dimensional spin chains, which are hard and spin chains with $r \rightarrow \infty$, which are easy. In particular, we consider an arbitrary but fixed lattice dimension and show that this rotation-invariant Hamiltonian problem is quantumly hard as you scale the lattice length.

\subsection{Results}
We informally describe the rotation-invariant Hamiltonian problem as follows. 

\begin{definition} [Rotation-invariant Hamiltonian problem (Informal)]
    Consider the Hamiltonian where a single two-body term is applied to each neighboring pair of qudits on an $r$-dimensional lattice of side length $n$. Is the ground-state energy below $a$ or greater than $b$?
\end{definition}

Our main result is that the rotation-invariant Hamiltonian problem is $\qmaexp$-complete, where $\qmaexp$ is the same as QMA except the witness and verification circuit are allowed to be exponentially large in the input size. The reason we consider $\qmaexp$ rather than QMA is that the input size to our problem is actually very small in comparison to the size of the Hamiltonian. To see this, notice that our Hamiltonian can be completely described by 1) the two-body term, 2) the dimension $r$, and 3) the lattice length $n$. The first two of these only require a constant number of bits to specify, while $n$ requires $\log n$ bits to specify. Therefore, the Hamiltonian description length is exponentially smaller than the total number of qudits. However, we still would like to allow  an ``efficient'' algorithm to run in polynomial time with respect to the number of qudits, which in turn is exponential in the input size. To accommodate this technicality, we must prove quantum hardness even when the quantum computer is allowed an exponential amount of computation time.

To prove $\qmaexp$-completeness, we use the standard method of reducing an instance $x$ of an arbitrary $\qmaexp$ problem to an instance $R(x)$ of the rotation-invariant Hamiltonian problem. It turns out that in this reduction only $n$ depends on the original problem instance $x$, so we take everything else (such as the two-body term and the lattice dimension) to be parameters rather than inputs to the problem. This is described in the more technical definition of the problem in Section \ref{sec:notation}. This differs from the standard QMA-completeness result, where the Hamiltonian terms themselves are given as input. We argue that this is a more natural setting since often one is studying a particular Hamiltonian, and so it is more suitable to consider the hardness of a given Hamiltonian for increasingly large system sizes. A desirable feature of our reduction is that the Hamiltonian we construct has no dependence on the system size or lattice dimension.

The $\qmaexp$-completeness of this problem has a number of interesting implications. First, it suggests that not even a quantum computer can find the ground-state energy of certain rotation-invariant Hamiltonians. Since nature can be viewed as a quantum computer this means that the system itself cannot find its own ground state either, suggesting the emergence of spin glass behavior at low temperatures. Next, our result implies (assuming $\qcmaexp \neq \qmaexp$) that the ground state of these Hamiltonians cannot have an efficient classical description and thus cannot be well approximated by a product state. In fact, our result directly implies a lower bound of $\Omega(n^{-r} r^{-1})$ for how close the average ground-state energy per term can be approximated by a product state. This complements the result in \cite{brandaoharrow} by providing a corresponding lower bound for the product-state approximation error.

\subsection{Techniques}

Our main approach is to carve out one-dimensional chains within our higher dimensional lattice so that we can apply \cite{v009a002}'s 1D construction to these chains. To do this, we take inspiration from the closely related classical problem of tiling. In this problem, imagine fitting together square tiles to cover an entire floor, where we are given a penalty for placing certain color tiles next to each other. The task is now to design a set of tiling rules where the least penalized configuration is a pattern of stripes. If such a set of tiling rules exists, we can use a portion of each qudit's Hilbert space to represent a tile and encode the tiling rules in our two-body Hamiltonian term. This enforces that the tiling of the ground state will have this striped pattern. Our construction then proceeds by applying the 1D construction onto neighboring qudits with same-colored tiles, which are now effectively spin chains.

Unfortunately, such a set of tiling rules does not actually exist, and so we will have to modify this classical technique to incorporate some quantum phenomena. To see this, consider the 3D case with periodic boundary conditions. No matter what set of rules are given, the optimal tiling will always take the following form. Start by tiling the first column of the first 2D slice with the optimal 1D configuration.  Then, for each subsequent column in this slice, tile it by offsetting this sequence by exactly one. Finally for each subsequent 2D slice, tile by shifting the entire previous 2D configuration by one. With some thought, one can convince themselves that this is the correct tiling. The issue is that this configuration cuts the 3D lattice into diagonal planes which is not the desired 1D structure. Additionally, this argument also shows that these rotation-invariant tiling problems are in P whereas their translation-invariant (but not rotation-invariant) counterparts are NEXP-complete for dimension 2 and higher \cite{v009a002}. This further shows how the additional symmetry constraints can potentially simplify the complexity.

It is possible to still achieve our tiling goal by combining it with some purely quantum effects. In particular, we introduce a new technique that uses the monogamy of entanglement to enforce an effective 1D geometry. This is performed by first appending two qubits to the Hilbert space of each particle. The key idea is to enforce that same-colored neighbors share an EPR pair among their qubits. Since each site only has two qubits, it can only share an EPR pair with two neighbors by the monogamy of entanglement. Thus, it can only have two same-colored neighbors. This is already close to our goal, since now our lattice must be colored by disjoint same-colored loops. It remains to make sure that these loops do not have any turns but instead cut straight across the lattice. This can be handled by imposing some further classical tiling constraints, which we discuss in more detail in Section \ref{sec:construction}. We hope that this EPR pair technique can also find use in other Hamiltonian complexity problems that benefit from embedding lower dimensional geometries into higher ones.

One last technicality to resolve is that \cite{v009a002}'s 1D Hamiltonian is frustrated and has an energy of at least $1/2$. This results in an overall energy of $n^{r-1}/2$ when this Hamiltonian is embedded into $n^{r-1}$ 1D lines in the lattice. In order to balance this energy penalty with the rest of the Hamiltonian terms, it is necessary to normalize this contribution with a coefficient that depends on $n$. However, such a system size dependence is unnatural and is preferably avoided. To fix this, we first embed a 2D translation-invariant Hamiltonian into the lattice by encoding stripes in two different directions as opposed to just one. In this case, the same result can be achieved as in the 1D case except the construction can now be made to be nearly frustration-free, removing the need for a system-size dependent normalization term.

\subsection{Outline}
We begin by describing our notation and briefly state the technical version of our result in \cref{sec:notation}. Next, we define the Hamiltonian construction in \cref{sec:construction}. Finally, the proof of our main theorem is presented in \cref{sec:analysis} where it is shown that our construction satisfies both completeness and soundness.

\section{Notation and technical result} \label{sec:notation}
    Define $\Lambda_r(n) \defeq \mathbb{Z}^r/n\mathbb{Z}^r$ to be a periodic lattice. In other words, the lattice is $r$-dimensional, where each dimension has length $n$ and each site is denoted by integer coordinates. For a lattice point $u \in \Lambda_r(n)$ we denote the $i$th coordinate as $u_i$. To define distance between points in the lattice while respecting the periodic boundary conditions, we use the Lee metric, which is defined as follows:
    \begin{definition}[Lee metric]
    Let $x,y \in \Lambda_r(n)$.
    \begin{align*}
        d(x,y) = \sum_{i=1}^r \min (|x_i - y_i|,n-|x_i - y_i|)  
    \end{align*}
    
    \end{definition}
    
    The set of nearest-neighbor pairs is defined as $E_{\Lambda_r(n)} = \{\{x,y\}: x,y \in \Lambda_r(n), d(x,y)=1\}$. If $h$ is a 2-local Hamiltonian term and $u,v \in \Lambda_r(n)$ then $h^{u,v}$ denotes the Hamiltonian term $h$ applied to sites $u$ and $v$. For an operator $H$, we denote the lowest eigenvalue of $H$ by $E_0(H)$.

    \begin{definition}[$\qmaexp$]
        A language $L$ is in $\qmaexp$ if there exists a quantum verifier $V$ such that on input $x$, $V$ has runtime $O(2^{\abs{x}^k})$ for some $k$. In addition, if $x \in L$ then there exists an $O(2^{\abs{x}^k})$-qubit state $\ket{\psi}$ such that $V(x,\ket{\psi})$ accepts with probability at least $2/3$. If $x \not\in L$, then $V(x, \ket{\psi})$ accepts with probability at most $1/3$.
    \end{definition}

With this notation in hand, the formal definition of the rotation-invariant Hamiltonian problem can be stated as follows:
\begin{definition}
\label{def:rdimrih}
r-DIM-RIH (Rotationally-Invariant Hamiltonian) \\
\textbf{Problem Parameter:} The geometric dimension of the lattice $r$. A permutation-invariant two-qudit Hermitian operator $h$. Two polynomials $p$ and $q$.\\
\textbf{Input:} Integer $n$ specified in binary.\\
\textbf{Promise: } Let $N=|\Lambda_r(n)|=n^r$. Consider the Hamiltonian $H = \sum_{\{u,v\} \in E_{\Lambda_r(n)}} h^{u,v}$. The ground state energy of $H$ is either at most $p(n)$ or at least $p(n) + 1/q(n)$.\\
\textbf{Output:} Determine whether the ground-state energy of $H$ is at most $p(n)$ or at least $p(n) + 1/q(n)$.
\end{definition}

In particular, notice that $h$ does not depend on the system size $n$ or the lattice dimension $r$ in this definition. Our main result is the following theorem.

\begin{theorem}
    r-DIM-RIH is $\qmaexp$-complete for $q(n) = 1$.
\end{theorem}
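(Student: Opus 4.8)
The plan is to establish both directions of the $\qmaexp$-completeness claim. Containment is routine: on input $n$ (in binary), a verifier takes as its witness the purported ground state of $H=\sum_{\{u,v\}\in E_{\Lambda_r(n)}} h^{u,v}$, writes $H$ down using only $O(\log n)$ bits (since $h$ and $r$ are fixed parameters), and runs phase estimation to additive precision $1/q(n)$. Since $H$ acts on $N=n^r$ qudits this costs $\mathrm{poly}(N)$ time, which is exponential in the input length $\abs{x}=\Theta(\log n)$ and hence permitted for $\qmaexp$; completeness and soundness of this verifier are exactly the promise thresholds $p(n)$ and $p(n)+1/q(n)$. The substance of the theorem is therefore $\qmaexp$-hardness.

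For hardness I would reduce a generic $\qmaexp$ language to r-DIM-RIH, using as a black box the one- and two-dimensional translation-invariant circuit-to-Hamiltonian constructions of \cite{v009a002}: on a line, respectively a torus, of side length $n$ these furnish a single fixed $2$-local term whose induced Hamiltonian has ground energy $p(n)$ on yes-instances and at least $p(n)+1/\mathrm{poly}(n)$ on no-instances, and in the $2$D version the construction can be made nearly frustration-free so that its terms can be balanced against the rest without an $n$-dependent normalization coefficient. The gap will be boosted to the constant $1/q(n)=1$ by standard gap amplification together with summing the penalty over the many parallel copies of the $2$D problem embedded in the lattice. The real work is to realize this lower-dimensional structure \emph{inside} an $r$-dimensional rotation-invariant Hamiltonian, whose unique term $h$ must act identically along every lattice direction and must not depend on $n$ or $r$.

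Concretely, I would give each site the Hilbert space $\mcT\otimes(\CC^2)^{\otimes2}$, where $\mcT$ holds a constant number of tile colors and the two extra qubits carry EPR pairs, and take $h$ (symmetrized to be permutation-invariant) to be a sum of three pieces: (i) classical tiling penalties on $\mcT\otimes\mcT$, designed so that the cheapest tilings are precisely the striped patterns that foliate $\Lambda_r(n)$ into straight parallel $2$D planes; (ii) an EPR penalty that, on any edge whose tiles declare the two sites ``same-colored'', charges energy unless those sites share an EPR pair across one of their qubit registers --- since each site owns only two qubits, monogamy of entanglement forbids satisfying this with more than two neighbors, which together with (i) kills the spurious diagonal-plane tilings and any loops with turns; and (iii) the \cite{v009a002} $2$D term, applied to an edge only when its tiles certify the two sites as adjacent within a common embedded plane. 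For completeness, from a yes-witness I would build the state that on each embedded plane is the intended striped tiling, tensored with EPR pairs along the stripes, tensored with the $2$D history state of the verifier; every term then takes its designed value and the total is $p(n)$. For soundness, a state of energy below $p(n)+1$ must (by (i)) be nearly supported on valid tilings, the monogamy argument for (ii) then forces the straight-plane geometry on that support, and (iii) reduces to independent copies of the \cite{v009a002} $2$D Hamiltonian whose soundness gives energy at least $p(n)+1$; the standard projection-lemma / Kitaev geometric-lemma bookkeeping absorbs superpositions over tilings and the cross-terms among the three pieces.

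I expect the main obstacle to be making the monogamy-of-entanglement step quantitative and composable --- showing that merely \emph{low} energy of pieces (i)--(ii), not zero energy, already forces the global low-energy space to decompose, up to small error, as approximately a direct sum over straight-plane foliations, so that piece (iii) can be analyzed plane by plane, while simultaneously controlling all cross-terms so the three contributions to $h$ do not destructively interfere and the promise gap survives as a genuine constant. Secondary difficulties are engineering the tiling rules and the EPR mechanism to cooperate under the permutation-invariance constraint, and verifying that the $2$D \cite{v009a002} construction tolerates being embedded alongside the spectator tile and qubit degrees of freedom.
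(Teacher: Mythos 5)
Your overall strategy is the paper's: fixed tile registers plus EPR qubits, monogamy of entanglement to defeat the diagonal tilings that purely classical rules cannot exclude, and an embedded translation-invariant construction from \cite{v009a002} carrying the computational content. However, there is one concrete inconsistency in your geometric mechanism. You ask the tiling to foliate $\Lambda_r(n)$ directly into $2$D planes while simultaneously invoking monogamy of the \emph{two} qubits per site to cap the number of same-colored neighbors at two; but a site interior to a $2$D plane has four same-plane neighbors, so these two demands contradict each other. With two EPR qubits per site the same-colored components can only be forced to be degree-two subgraphs, i.e.\ $1$D chains. The paper resolves this by duplicating the entire tile-plus-qubit apparatus: one copy produces a family of straight $1$D loops in one lattice direction, the second copy produces a transverse family, a term $h_{\text{copy}}$ forbids the two copies from agreeing in color on any edge (so the two families cannot align), and the $2$D slices are then read off from the intersection pattern of the two families. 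You would also need the cyclic number labels $\{0,1,2\}$ on each chain for two purposes you fold into monogamy but which are really classical: orienting each chain (so the non-reflection-invariant $2$D term knows left from right) and excluding turns (a turn forces a different-colored neighbor to carry two conflicting numbers). This in turn forces $3\mid n$, and the reduction must choose $n=f(x)=3p$ with $p$ prime so that the \cite{v009a002} $2$D tiling singles out a unique marked row.

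On the soundness bookkeeping, you are anticipating more machinery than is needed. Every term of $h$ is diagonal in the tile registers, so an arbitrary state decomposes \emph{exactly} (not approximately) as a mixture over classical tilings, and the energy is a convex combination of per-tiling energies; no projection lemma or geometric lemma is required. For each fixed tiling one then runs a finite case analysis (not looped; looped with a turn; not uniformly directed; not consistently numbered; fully consistent), each bad case costing an explicit constant above the baseline $4n^r(r-1)$, with the monogamy penalty coming from the exact eigenvalue computation $E_0\bigl((\eye-\ketbra{\Phi^+}{\Phi^+})_{u,v}/2+(\eye-\ketbra{\Phi^+}{\Phi^+})_{v,w}/2\bigr)=1/4$ scaled by the coefficient $16$. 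Only in the fully consistent case does the \cite{v009a002} soundness bound enter, summed over the $n^{r-2}$ embedded slices; no separate gap amplification step is needed beyond choosing the $2$D construction to be nearly frustration-free, exactly as you anticipate.
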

\section{Hamiltonian construction}
\label{sec:construction}
Our strategy will be first to embed a two-dimensional translation-invariant Hamiltonian without reflection symmetry into our \textit{rotationally}-invariant Hamiltonian. In the case where our lattice has dimension $r$, we will break up the lattice into $n^{r-2}$ 2D slices and embed the Hamiltonian into each of these slices. Then we can utilize the extra parameters of the 2D translation-invariant Hamiltonian to embed a one-dimensional hard Hamiltonian that is nearly frustration-free.

To accomplish this, it is crucial to have a mechanism to break the rotation symmetry by selecting a direction. We will do this by embedding two sets of directed stripes to indicate the two directions of the 2D grid.

\subsection{Embedding directed stripes}

In our construction, we will attach the following Hilbert spaces to each site in the lattice:
\begin{align*}
\mcH_{T_1} \otimes \mcH_{T_2}
\end{align*}
Our Hamiltonian will act diagonally in these subspaces and, therefore, it will only enforce classical constraints with respect to a given set of basis states, which we denote by the sets $T_1$ and $T_2$, respectively. We refer to these basis states as ``tiles" that we can assign to each site.

We define $T_1 := \{\text{red}, \text{yellow},\text{blue}\}$ and $T_2 := \{0,1,2\}$, which associate a color and number with each tile, respectively. It can be enforced that two tiles $t_1$ and $t_2$ cannot be placed next to each other by including the term $\ketbra{t_1,t_2}{t_1,t_2}$ in the Hamiltonian. In this way, we incorporate the following rules for which tiles are allowed to be placed next to each other:

\begin{enumerate}
    \item If two neighboring tiles have the same color then they must have different numbers.
    \item If two neighboring tiles have different colors then they must have the same number.
\end{enumerate}

To write down the Hamiltonian terms associated with these rules more explicitly, let $V$ represent the set of illegal neighboring tiles. Then we include the following Hamiltonian term:

\begin{align*}
    h_{\text{tile}} = 8 \sum_{(s,t) \in V} \ketbra{s,t}{s,t}
\end{align*}

The energy cost of $8$ is carefully chosen to balance out other competing terms introduced later.

\subsubsection{EPR projections}
Next, we would like to enforce that the qubits of same-colored neighbors form EPR pairs with each other. We can do this by attaching the following two additional Hilbert spaces to each site:
\begin{align*}
    \mcH_{\sigma_1} \otimes \mcH_{\sigma_2}
\end{align*}

Since each site only has two qubits it can only form two EPR pairs and therefore can only have two same-colored neighbors. Thus, this accomplishes our goal of forcing the same-colored tiles to form one-dimensional chains.

Given two same-colored neighbors $u$ and $v$, it remains to determine which of their qubits must form EPR pairs. To do this, we first define directed edges between each basis state of $T_2$ such that $0 \rightarrow 1$, $1 \rightarrow 2$ and $2 \rightarrow 0$. Due to the constraints in the previous section, $u$ and $v$ must both have different numbers. Without loss of generality, if the directed edge between these numbers points towards $v$'s tiles then we enforce that $u$'s $\sigma_2$ qubit forms an EPR pair with $v$'s $\sigma_1$ qubit. We denote this EPR pair state as $\ket{\Phi^+}_{u_{\sigma_2}v_{\sigma_1}} = \frac{1}{\sqrt{2}} (\ket{00} + \ket{11})_{u_{\sigma_2}v_{\sigma_1}}$. More formally we add the following term acting on sites $u$ and $v$: 
\begin{align*}
    A^{u,v} = 16 \sum_{i \in T_2} \ketbra{i,i+1 \bmod 3}{i,i+1 \bmod 3} \otimes (\frac{\eye - \ketbra{\Phi^+}}{2})_{u_{\sigma_2}v_{\sigma_1}}
\end{align*}
Our convention throughout is to separate sites of the lattice by commas within the braket notation and to separate subspaces within each site by tensor product symbols. Notice that if $u$ and $v$ have different numbers then this implies they also have the same color and so this does not need to be additionally conditioned on in this Hamiltonian term. In order to preserve rotation invariance, we add this term for both orderings of the particles $u$ and $v$: 
\begin{align}
    h_{\text{EPR}} =A^{u,v}+A^{v,u}
\end{align}

In addition to enforcing that contiguous regions of same-colored sites form 1D chains, these EPR constraints also require that each same-colored chain is numbered as the periodic sequence: $0,1,2,0,1,2,\dots$ either in the forwards or backwards direction (see \cref{fig:eprchain}). This is because there can never be a site where the directed edges incident to it are both pointing away or both pointing towards it. In either case, this requires one of that site's qubits to be in two different EPR pairs, which is not allowed by the monogamy of entanglement. This scenario is depicted in \cref{fig:eprupdown}. This sequential numbering, is very helpful because it defines a direction to each chain. Notice that such a numbering is only possible when $n$ is a multiple of $3$, so we will later incorporate this restriction into our hardness reduction.
\begin{figure}
    \centering
    \includegraphics[width=0.7\columnwidth]{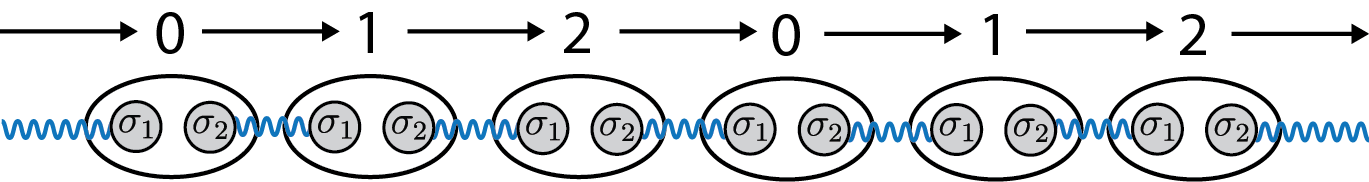}
    \caption{When a chain of sites is numbered sequentially around the cycle $\mathbb Z_3$, each qubit is matched with exactly one other qubit to form an EPR pair, and so the EPR constraint can easily be satisfied.}
    \label{fig:eprchain}
\end{figure}

\begin{figure}
    \centering
    \includegraphics[width=0.2\columnwidth]{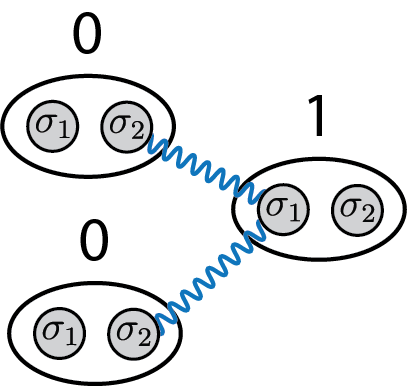}
    \caption{When there are three consecutive same-color neighbors that are not numbered monotonically around $\mathbb Z_3$ (for instance $0,1,0$) then two different qubits are matched with the same qubit to form an EPR pair. Due to the monogamy of entanglement this constraint cannot be satisfied and incurs an energy penalty.}
    \label{fig:eprupdown}
\end{figure}

\subsubsection{1D chain boundary conditions}
Given the current Hamiltonian terms, the same-colored sites can form chains with either open or periodic boundary conditions (lines or loops). It will be convenient later that the boundary conditions are fixed and so we add another term to enforce periodic boundary conditions. In particular, we define the following term acting on sites $u$ and $v$:
\begin{align*}
    h_{\text{loop}} = 2\sum_{c,d \in T_1 \text{ } | c \neq d}\ketbra{c,d}{c,d},
\end{align*}
which penalizes neighboring sites with different colors. This adds a penalty of $2r-2$ for every site in the middle of the chain. This is because each site has $2r$ neighbors, where all but exactly two are colored differently. This results in a penalty of $2(2r-2)/2$ where we have divided by two to fix double-counting. Using similar reasoning, a penalty of $2r-1$ is incurred for every site at the endpoint of an open chain. Assuming that the classical tiling and EPR constraints are satisfied, this term is optimized when all 1D chains form loops so that each site always neighbors two other sites of the same color. The scaling of $2$ is chosen so that the energy savings of coloring neighboring sites the same will never outweigh the energy penalty of violating the EPR constraints. This tradeoff will be worked out in detail in Section \ref{sec:soundness}.

\subsection{Adding the second dimension}
Now that we have embedded stripes in one direction of the lattice, we must repeat this process to embed stripes in another direction. To do this, we can simply make a copy of each site's Hilbert space and apply the same Hamiltonian terms to the copy. It remains to ensure that both copies do not have stripes oriented in the same direction. To do this, it is sufficient to simply disallow two neighboring tiles from having matching colors on both copies. In other words, we add the following term where we let $\mcH_{T_1}$ denote the $T_1$ subspace of the first copy and $\mcH'_{T_1}$ denote that of the second copy.

\begin{align*}
    h_{\text{copy}} = (\sum_{c \in T_1}\ketbra{c,c}{c,c})_{\mcH_{T_1}} \otimes (\sum_{d \in T_1}\ketbra{d,d}{d,d})_{\mcH'_{T_1}}
\end{align*}

\begin{figure}[t!]
    \centering
    \begin{subfigure}[t]{0.225\textwidth}
        \centering
        \includegraphics[width=\textwidth]{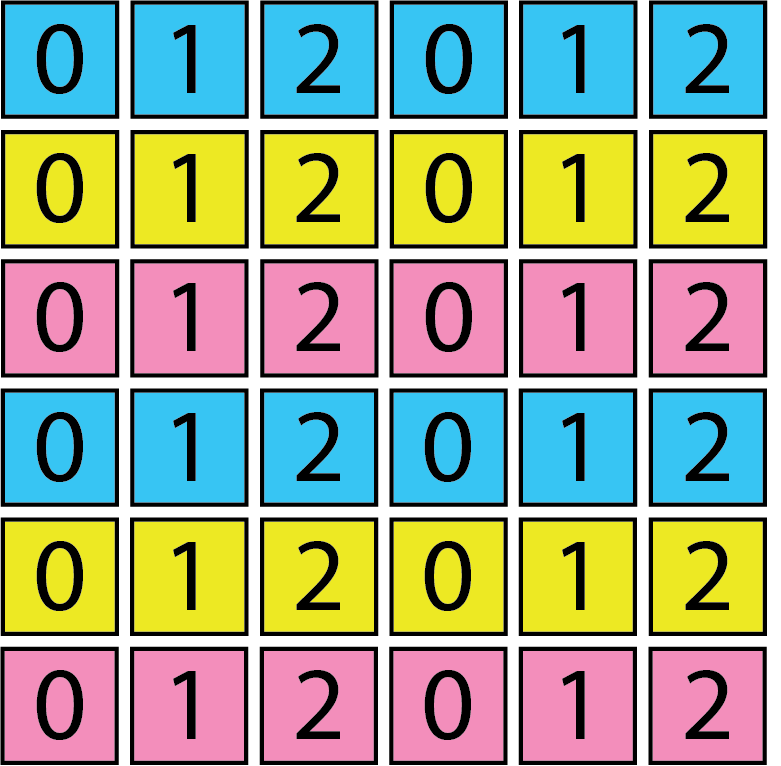}
        \caption{First copy of tiles}
    \end{subfigure}
    \hspace{0.2\textwidth}
    \begin{subfigure}[t]{0.225\textwidth}
        \centering
        \includegraphics[width=\textwidth]{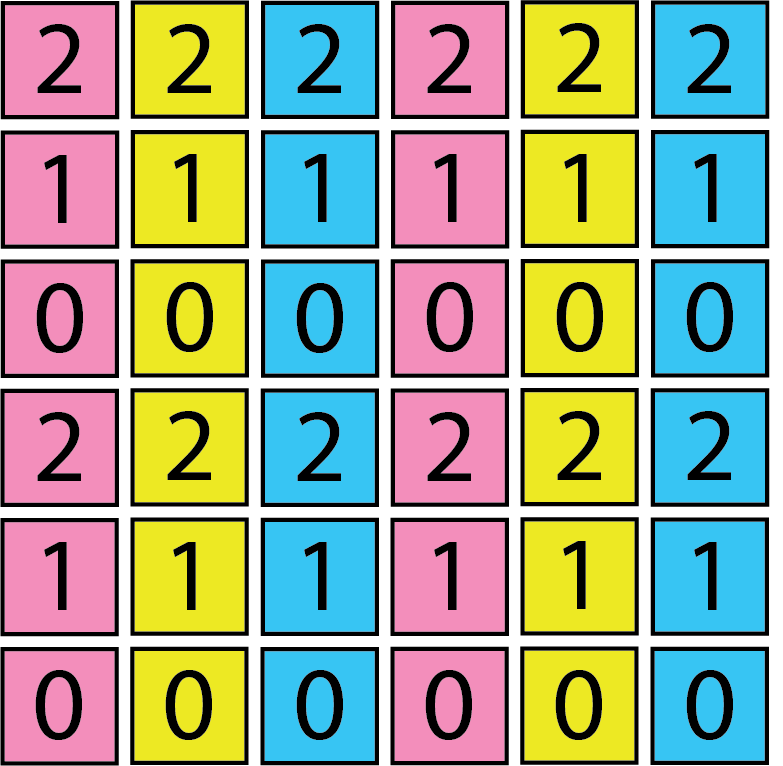}
        \caption{Second copy of tiles}
    \end{subfigure}
    \caption{An example of how a 2D lattice can be tiled to optimize the Hamiltonian terms. Specifically, each copy forms a striped pattern where each stripe is numbered in a cyclic sequence. In addition, the two copies of tiles must have stripes pointing in different directions. Finally, the rows/columns that do not hold stripes must be numbered with the same number. Now a translation-invariant Hamiltonian can be simulated by using these tilings as a guideline for which sites are above, below, left, or right of each other. As our convention, we take the first copy to denote the horizontal direction and the second copy to denote the vertical direction.}
    \label{fig:2dcolor}
\end{figure}

\subsection{Embedding the translation-invariant Hamiltonian}

An arbitrary two-dimensional translation-invariant Hamiltonian $H_{\text{TI}}$ can now be embedded into our rotation-invariant Hamiltonian by using the directed stripes as guidelines. Our convention will be to let the first copy of each site represent the horizontal stripes and the second copy represent the vertical stripes. These tiling patterns are depicted in \cref{fig:2dcolor}. Now, the horizontal Hamiltonian term is applied only when the first copy tiles have the same color (i.e., different numbers). In addition, the orientation of which site is on the left and which is on the right will be decided by the directed edge in between the two tile's numbers where each arrow points from left to right. The vertical Hamiltonian term is applied similarly with respect to the second copy of each site's tiles.

To make this more concrete, we first attach to each site of our lattice the Hilbert space of a site in $H_{\text{TI}}$ which we denote by $\mcH_{\text{2D}}$. Denote the horizontal 2-body term of $H_{\text{TI}}$ by $h_{\text{TI}}$. To incorporate this into our construction we add the following term:
\begin{align}
h_{\text{RI}} = \sum_{i \in T_2} \ketbra{i,i+1 \bmod 3} \otimes h_{\text{TI}} \\
+  \sum_{i \in T_2} \ketbra{i,i-1 \bmod 3} \otimes Sh_{\text{TI}}S
\end{align}
where $S$ is the swap operator on the two sites and the term acts on the first copy of $\mcH_{T_2}$. Note that while $h_{\text{TI}}$ does not have reflection symmetry, $h_{\text{RI}}$ does. The vertical term is implemented in the same way but acts on the second copy. We denote the vertical term by $v_{\text{RI}}$.  

It remains to now describe the 2D translation-invariant Hamiltonian that encodes the computational hardness. We will combine techniques from \cite{v009a002} in order to encode the desired ground-state energy in the yes and no cases. The details are deferred to \cref{app:2d} but the key result is outlined below.

\begin{theorem}
\label{thm:GI}
      Let $L$ be a $\qmaexp$-complete language. There exists an efficiently computable function $f: \{0,1\}^* \rightarrow \mathbb{Z}$ and 2-local positive semidefinite Hamiltonian terms $h_{\text{TI}}$ and $v_{\text{TI}}$ with the following properties:
    \begin{enumerate}
        \item $f(x)$ is a multiple of $3$ and $f(x)/3$ is prime. Furthermore, $\log f(x)=O(\poly(|x|))$ and $f$ is computable in $O(\poly(|x|))$ time.
        \item Let $f(x) = n \geq n_0$ for some constant $n_0$ and a given problem instance $x \in \{0,1\}^*$. For an $n \times n$ 2D lattice with periodic boundary conditions, let $E_h$ be the set of ordered pairs of horizontal neighbors and let $E_v$ be the set of ordered pairs of vertical neighbors. Consider the Hamiltonian $H_{\text{TI}}=\sum_{(u,w) \in E_h} h_{\text{TI}}^{u,w} + \sum_{(x,y) \in E_v} v_{\text{TI}}^{x,y}$. 
        \begin{enumerate}
            \item If $x \in L$ then $E_0(H_{\text{TI}}) \leq O(n^{-k})$ for an arbitrarily large constant $k$.
        \item If $x \not\in L$ then $E_0(H_{\text{TI}}) \geq \Omega(1/n^3)$
        \end{enumerate}
    \end{enumerate}
\end{theorem}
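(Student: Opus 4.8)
I would adapt the translationally invariant circuit-to-Hamiltonian construction of \cite{v009a002} to two dimensions, letting one lattice direction serve as space and the other as time. The extra room of the 2D layout is exactly what removes the $\Omega(1)$ frustration that is unavoidable in the 1D chain of \cite{v009a002}: with time a genuine physical direction, the yes-case history state can be made a zero-energy state of every propagation and consistency term, paying only an exponentially small output penalty. The only quantity depending on $x$ is the integer $n=f(x)$, defined as follows. Fix a self-delimiting binary encoding $\langle x\rangle$ of the instance (prefixing $x$ by its length in binary suffices), viewed as an integer, and let $t=t(|x|)$ be a fixed polynomial, of high enough degree that $2^{t}$ dominates the gate count and qubit count of the $\qmaexp$ verifier circuit $V_x$ for the target language $L$. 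Set $f(x):=3p$, where $p$ is a prime in the interval $[\langle x\rangle\,2^{t},\,(\langle x\rangle+1)\,2^{t})$; such a prime exists for $t$ polynomially large by standard prime-gap bounds and can be produced in randomized polynomial time (primes have density $\Omega(1/t)$ there and primality is decidable in $\poly(t)$ time), so $f$ is efficiently computable — deterministically under standard prime-gap conjectures, and in any case a randomized reduction is harmless for $\qmaexp$-hardness. Then $3\mid f(x)$, $f(x)/3=p$ is prime, and $\log f(x)=O(\poly(|x|))$, which is Property~1; moreover $x$ is recovered from $n=f(x)$ by recomputing $t$ from the bit-length of $n$, taking $\langle x\rangle=\lfloor n/(3\cdot 2^{t})\rfloor$ (exact, since the prime gap is below $2^{t}$), and decoding.

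\textbf{The Hamiltonian.} On the $n\times n$ torus, using a finite set of positive-semidefinite $2$-local nearest-neighbor terms that depend on neither $x$ nor $n$, I would encode the space--time history of a fixed universal reversible Turing machine $U$: each row is a snapshot, the vertical terms $v_{\text{TI}}$ check that consecutive rows differ by one step of $U$ (including the update of a binary clock counter), and the horizontal terms $h_{\text{TI}}$ check that each row is a syntactically legal configuration — one head, a well-formed clock counter that increments from row to row modulo $n$, and, in the unique row whose counter reads $0$, the initial configuration of $U$ — all via the standard encoding of Turing steps as nearest-neighbor checks, cf.\ \cite{v009a002}. The computation of $U$ then runs in phases: it learns $n$ from the toroidal geometry (a marked walk around a loop) and decodes $x$; it writes the description of $V_x$, whose gate and qubit count are at most $n^{o(1)}$ by the choice of $t$; and it runs a Feynman--Kitaev phase applying $V_x$ gate by gate to a witness register, with propagation terms, an input-initialization term, and an output term penalizing the rejecting branch of the answer qubit, followed by uncomputation and idle steps (counted against the learned value of $n$) so that the whole orbit closes into a length-$n$ cycle compatible with the periodic boundary conditions, again as in \cite{v009a002}. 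Every term above is a sum of (scaled) projectors, hence positive semidefinite.

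\textbf{Completeness, soundness, and the main obstacle.} For $x\in L$, first amplify $V_x$ so that some witness is accepted with probability $1-2^{-\poly(n)}$; the corresponding 2D history state then satisfies every clock, propagation, input, and consistency term exactly and incurs only the output penalty $\le 2^{-\poly(n)}=O(n^{-k})$ for every constant $k$, establishing (a). For $x\notin L$, a projection-lemma/geometrical argument in the style of \cite{kitaev,v009a002} shows that any state of energy below a small threshold is close to a legal history; here the primality of $n/3$ is what eliminates spurious low-energy configurations of period $1$, $3$, or $n/3$ on the torus (the first two cannot carry a legal configuration, and period $n/3$ is incompatible with the clock counter being injective modulo $n$). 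Restricted to legal histories the full orbit has $\Theta(n)$ steps, so Kitaev's spectral-gap analysis applies with ``$T=\Theta(n)$'', and since no witness makes $V_x$ accept with probability above $1/3$ this gives $E_0(H_{\text{TI}})=\Omega(1/n^3)$, establishing (b). The main obstacle is reconciling translation invariance with the toroidal boundary conditions: the fixed rule set cannot reference $n$ or $x$, so $U$ must rediscover $n$ from the geometry and the space--time orbit must close into an exactly $n$-periodic pattern with no term ``knowing'' $n$ — the delicate clock-counter bookkeeping, reversible padding, and $n/3$-prime condition imported from \cite{v009a002} — and the new content is carrying this through in two dimensions while keeping the yes-case frustration-free up to the exponentially small output term.
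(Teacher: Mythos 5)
Your choice of $f$ (a prime $p$ whose most significant bits encode $x$, with $f(x)=3p$) matches the paper, and the $3\times 3$ blow-up issue is handled there the same way you implicitly assume. But the core of your construction --- ``letting one lattice direction serve as space and the other as time,'' with each row a snapshot of the computation and the vertical terms $v_{\text{TI}}$ checking that consecutive rows differ by one step --- has a genuine gap for the quantum part of the verification. A space--time layout of a computation history works for a \emph{classical} Turing machine (this is the standard tiling picture), but the intermediate configurations of the quantum verifier $V_x$ are entangled states on an entire row. A 2-local vertical term cannot enforce ``row $t+1$ equals row $t$ with one gate applied'': away from the gate location this would require each qudit of row $t+1$ to be a copy of the corresponding qudit of row $t$, and no projector has, as a zero-energy state, two rows each carrying the same entangled $n$-qudit state --- this is exactly what no-cloning forbids. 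Consequently your claimed yes-case state that ``satisfies every clock, propagation, input, and consistency term exactly'' does not exist; the Feynman--Kitaev mechanism requires the time steps to live in superposition (a history state $\sum_t \ket{t}\ket{\psi_t}$ over a clock register), not side by side on the lattice. Your soundness sketch inherits the same problem, since the ``legal histories'' it restricts to are not the ground space of any 2-local vertical propagation terms.

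The paper's route is different and avoids this entirely: the second spatial dimension is used only \emph{classically}. A translation-invariant, non-reflection-invariant tiling rule set (Section 4.1 of \cite{v009a002}) forces, when the side length is prime, a unique marked row and a unique marked site in that row; after simulating each tile by a $3\times 3$ block so that $n=3p$, the standard 1D clock/history-state construction of \cite{v009a002} is embedded along that single marked row, with the marked site serving as a designated left endpoint. It is this designated starting tile --- not a spatial time axis --- that removes the $1/2$ frustration of the 1D cyclic construction, yielding $E_0 \le O(n^{-k})$ in the yes case (via witness amplification with $k=\log n$) and $\Omega(1/n^3)$ in the no case. If you want to salvage your write-up, replace the space--time encoding of $V_x$ by this ``2D tiling selects a 1D track, then run the 1D history-state construction on that track'' architecture; the classical phases you describe (learning $n$ from the geometry, decoding $x$, the binary counter) survive essentially unchanged.
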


Using the prime symbol to denote terms acting on the second copy of a site's Hilbert space, we can now write the entire 2-body Hamiltonian term of our construction as follows:
\begin{align}
\label{eq:h}
    h = h_{\text{tile}}+h_{\text{EPR}}+h_{\text{loop}} + h'_{\text{tile}}+h'_{\text{EPR}}+h'_{\text{loop}} + h_{\text{copy}}+h_{\text{RI}}+v_{\text{RI}}
\end{align}

We can now define the reduction from any $\qmaexp$ problem instance to an r-DIM-RIH problem instance.
\begin{definition}[Reduction from $\qmaexp$ to r-DIM-RIH]
    Let $L$ be a language in $\qmaexp$ and let $x \in \{0,1\}^*$ be a problem instance. Define the function $R(L,x,r) = \sum_{\{u,v\} \in E_{\Lambda_r(f(x))}} h^{u,v}$ where $f: \{0,1\}^* \rightarrow \mathbb{Z}$ is constructed as in Theorem \ref{thm:GI} and $h$ as in Eq. \ref{eq:h}.
\end{definition}

Next, the detailed analysis of this construction is provided.

\section{Analysis}
\label{sec:analysis}

In this section we prove the main theorem that the rotation-invariant Hamiltonian problem is $\qmaexp$-complete.

\subsection{Completeness}
We begin by first considering the case where $x \in L$. In this case, we present a ground state that achieves energy below $p(n) = 4 n^r (r-1)+1/g(n)$ for an arbitrarily large polynomial $g(n)$.
\begin{lemma}
    Let $L$ be a language in $\qmaexp$ and let $x \in \{0,1\}^*$ be a problem instance. Let $H = R(L,x,r)$. If $x \in L$, then $E_0(H) \leq 4 n^r (r-1) +1/g(n)$ for any polynomial $g(n)$.
\end{lemma}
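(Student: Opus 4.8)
## Proof proposal

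The plan is to explicitly construct a witness state for $H = R(L,x,r)$ that saturates all the "classical" constraints exactly and incurs energy only from the embedded translation-invariant Hamiltonian. Since $x \in L$, by Theorem~\ref{thm:GI} there is a state $\ket{\psi_{\text{2D}}}$ on a single $n \times n$ slice with $\langle H_{\text{TI}} \rangle \leq O(n^{-k})$. The strategy is to tile the $r$-dimensional lattice so that every 2D slice (in the coordinate plane picked out by the first two coordinates, say) looks like \cref{fig:2dcolor}: the first copy of tiles forms horizontal stripes cyclically numbered $0,1,2,0,1,2,\dots$ and the second copy forms vertical stripes, similarly numbered, with all remaining rows/columns numbered consistently so that $h_{\text{copy}}$, $h_{\text{tile}}$, $h'_{\text{tile}}$ are all satisfied. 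Because $n = f(x)$ is a multiple of $3$, such a cyclic numbering wraps around consistently. On this tiling, each site has exactly two same-colored neighbors in copy 1 (its left/right stripe neighbors) and two in copy 2 (its up/down neighbors), so I place the EPR pairs $\ket{\Phi^+}$ on the qubit pairs dictated by the directed-edge rule; by the loop structure and the monogamy budget this can be done consistently, satisfying $h_{\text{EPR}}$, $h'_{\text{EPR}}$ exactly, and since all chains are loops, $h_{\text{loop}}$, $h'_{\text{loop}}$ are also at their minimum. Finally, on the $\mcH_{\text{2D}}$ registers I place (a tensor copy over the $n^{r-2}$ slices of) the 2D ground-ish state $\ket{\psi_{\text{2D}}}$, aligned to the horizontal/vertical directions fixed by the stripes, so that $h_{\text{RI}} + v_{\text{RI}}$ summed over the slice reproduces exactly $H_{\text{TI}}$ on each slice.

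Next I would total up the energy. The terms $h_{\text{tile}}, h_{\text{EPR}}, h_{\text{loop}}$ and their primed copies and $h_{\text{copy}}$ contribute a fixed classical baseline that depends only on the lattice geometry, not on $x$. I would compute this baseline carefully: with all chains forming loops, every site contributes $2r-2$ from $h_{\text{loop}}$ (as the excerpt notes, $2(2r-2)/2$ per site after correcting double-counting), the EPR and tile terms contribute $0$ in a satisfying configuration, and similarly for the primed copies, giving $2 \cdot n^r (2r-2)/2 \cdot \tfrac{1}{1}$... more precisely the bookkeeping should land on the claimed $4n^r(r-1)$ once both copies are summed (each copy contributes $n^r(2r-2)/1$ appropriately normalized with the coefficient $2$, i.e. $2 \cdot n^r(2r-2)/2 = n^r(2r-2)$ per copy? — this is exactly the routine count to verify, and it must match $p(n) = 4n^r(r-1)$). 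Then $h_{\text{RI}} + v_{\text{RI}}$ contributes $\sum_{\text{slices}} \langle H_{\text{TI}}\rangle \leq n^{r-2} \cdot O(n^{-k})$, which for $k$ chosen large enough (allowed by Theorem~\ref{thm:GI}) is at most $1/g(n)$ for any fixed polynomial $g$. Adding these gives $E_0(H) \leq 4n^r(r-1) + 1/g(n)$ as claimed.

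The main obstacle I expect is the exact constant bookkeeping: making sure the explicitly chosen striped-and-loop configuration is simultaneously optimal for every classical term, that no term I've ignored actually fires (in particular the interaction between stripe direction, the cyclic numbering mod $3$, and the EPR-edge orientation rule must be globally consistent on the torus — this is where $3 \mid n$ is essential), and that the per-site penalties sum to precisely $4n^r(r-1)$ and not some nearby value. A secondary subtlety is confirming that applying $h_{\text{RI}}+v_{\text{RI}}$ over a slice, with the swap-symmetrized form in Eqs.~(2)--(3), genuinely evaluates to $H_{\text{TI}}$ (and not $H_{\text{TI}}$ plus cross terms) on the state $\ket{\psi_{\text{2D}}}$ — this requires that the directed-edge orientations read off a consistent left-to-right / top-to-bottom order around each loop, which again uses the cyclic numbering. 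Once these consistency checks are in place, the energy bound is immediate. I would also remark that the $1/g(n)$ slack is harmless: it only needs completeness to beat the soundness threshold, which is handled by the $\qmaexp$ promise gap and the choice of $p,q$ in \cref{def:rdimrih}.
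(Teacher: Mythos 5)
Your overall strategy is the same as the paper's: exhibit an explicit witness consisting of a classical tiling that satisfies every tile/EPR/copy constraint exactly, pay only the unavoidable $h_{\text{loop}}$ baseline, and put (a tensor product over slices of) the near-ground state of $H_{\text{TI}}$ on the $\mcH_{\text{2D}}$ registers, giving $4n^r(r-1) + n^{r-2}\cdot O(n^{-k})$ with $k$ chosen large enough. The arithmetic you left hanging does close: each copy of $h_{\text{loop}}$ costs $2\cdot\tfrac{1}{2}\,n^r(2r-2) = 2n^r(r-1)$, so both copies give $4n^r(r-1)$, matching $p(n)$.

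The one genuine gap is that your construction is described only within a single 2D slice, and you never say how the colors are assigned across the other $r-2$ lattice directions. This is exactly where the $r>2$ case has content: each site has $2r$ neighbors, and your claim that ``each site has exactly two same-colored neighbors in copy 1'' requires that all $2(r-2)$ transverse neighbors (those differing in a coordinate other than the first two) be colored \emph{differently} from the site, while also carrying the \emph{same} number (to satisfy $h_{\text{tile}}$'s rule for different-colored neighbors). Naively replicating the Figure~\ref{fig:2dcolor} pattern on every slice fails: a site and its neighbor in the third direction would then share a color, giving four same-colored neighbors, breaking the EPR budget and the $h_{\text{loop}}$ count. The paper resolves this by taking the color of $u$ to be a proper $3$-coloring of the $(r-1)$-dimensional torus evaluated at $(u_2,\dots,u_r)$ --- such a coloring exists because the torus is a Cartesian product of $3$-colorable cycles (Sabidussi) --- and the number to be $u_1 \bmod 3$, which automatically makes transverse neighbors differently colored but identically numbered. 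You need this (or an equivalent explicit $r$-dimensional coloring) for the claimed baseline to be achievable; the rest of your argument, including the $h_{\text{RI}}+v_{\text{RI}}$ bookkeeping and the $3\mid n$ consistency checks you flag, then goes through as in the paper.
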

\begin{proof}
    Consider the following tiling, which generalizes the tiling of the 2D lattice depicted in \cref{fig:2dtiles}. We first construct a $3$-coloring of the $(r-1)$-dimensional lattice. This always exists because an $(r-1)$-dimensional lattice can be decomposed as a Cartesian product of cycle graphs. Since each cycle graph is $3$-colorable their Cartesian product is also $3$-colorable by a result by Sabidussi \cite{Sabidussi_1957}. To color a site in the full $r$-dimensional lattice we simply drop the $1$st coordinate and assign the color from the $3$-coloring of the remaining $r-1$ coordinates. This enforces that any two neighboring sites that have the same $1$st coordinate are colored by different colors and any two neighbors that only differ in the $1$st coordinate are tiled by the same colors. This has the effect of coloring 1D chains of sites that travel in a straight line along the $1$st coordinate dimension.

    For this coloring, it is easy to satisfy the numbering constraints. This can be accomplished by tiling all particles $u$ with the number $u_1 \text{ mod } 3$. This simultaneously tiles all 1D chains with the ordering $0,1,2,0,1,2,\dots$ which eventually wraps around since $n$ is a multiple of $3$. In addition, all neighboring tiles with different color tiles are tiled with the same number since this only occurs when the two particles have the same $1$st coordinate.

    The EPR constraint can easily be satisfied since the particles have been tiled as disjoint 1D chains with the appropriate numbering. In addition, all chains are loops and so the constraint on having periodic boundary conditions is also satisfied. This ensures that only a penalty of $ n^r(2r-2)=2 n^r (r-1)$ is introduced by the $h_{\text{loop}}$ term. We can repeat this for the second copy of tiles but now directing the 1D chains along the second coordinate. This introduces another penalty of $2 n^r (r-1)$.

    With this choice of tiles the lattice has effectively been broken up into $n^{r-2}$ 2D slices where we can now apply the 2D translation-invariant Hamiltonian construction. By \cref{thm:GI}, since $x \in L$ we have that each $2D$ slice contributes an energy of at most $O(n^{-k})$. The total energy is thus $O(n^{r-2 - k}) = O(n^{-k'})$ where we have chosen $k = r-2+k'$. This results in a final energy upper bound of $4 n^r (r-1) + O(n^{-k'})$.
\end{proof}
\subsection{Soundness} 
\label{sec:soundness}
In this section we will prove the following lemma:
\begin{lemma} \label{lemma:soundness}
    Let $L$ be a language in $\qmaexp$ and let $x \in \{0,1\}^*$ be a problem instance. Let $H = R(L,x,r)$. If $x \not \in L$, then $E_0(H) \geq 4n^r(r-1) +1$.
\end{lemma}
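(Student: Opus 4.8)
The plan is to exploit the fact that $H$ is block-diagonal with respect to the classical tiling register — the joint eigenbasis of the tile spaces $\mcH_{T_1},\mcH_{T_2}$ and their copies $\mcH'_{T_1},\mcH'_{T_2}$ at every site — since each term of $h$ is either diagonal on those spaces or, as with $A^{u,v}$, $h_{\text{RI}}$ and $v_{\text{RI}}$, acts on the $\sigma$-qubits and on $\mcH_{\text{2D}}$ only conditioned on a fixed tile pattern. Hence $E_0(H)=\min_{\tau}E_0(H|_{\tau})$, where $\tau$ ranges over assignments of tiles to the $n^r$ sites, so it suffices to lower bound $E_0(H|_{\tau})$ for each $\tau$. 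For fixed $\tau$ the terms $h_{\text{tile}},h_{\text{loop}},h_{\text{copy}}$ and their primes contribute a constant $c(\tau)\ge 0$; the EPR terms $h_{\text{EPR}},h'_{\text{EPR}}$ restrict to positive semidefinite operators on the $\sigma$-qubits whose minimum $\varepsilon(\tau)\ge 0$ vanishes exactly when $\tau$ induces a consistent EPR matching; and $h_{\text{RI}}+v_{\text{RI}}$ restricts to a positive semidefinite operator on the $\mcH_{\text{2D}}$ registers. Thus $E_0(H|_{\tau})\ge c(\tau)+\varepsilon(\tau)$, and the analysis splits into the ``defective'' and the ``perfect striped'' cases.

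For a defective $\tau$ I would show $c(\tau)+\varepsilon(\tau)\ge 4n^r(r-1)+1$. Monogamy of entanglement first turns the EPR penalties into a hard combinatorial constraint: by a quantitative monogamy bound for a qubit, for instance $\bra{\Phi^+}\rho_{AB}\ket{\Phi^+}+\bra{\Phi^+}\rho_{AC}\ket{\Phi^+}\le 3/2$ when $A$ is a qubit, if some site has three same-colored neighbours, or if a same-colored chain is numbered non-monotonically around $\mathbb{Z}_3$ as in \cref{fig:eprupdown}, then some $\sigma$-qubit is forced to satisfy two EPR constraints; since each such term penalizes deviation from $\ket{\Phi^+}$ with weight $8$, this costs at least $16-8\cdot\tfrac{3}{2}=4$ in $\varepsilon(\tau)$, which, given that $h_{\text{loop}}$ carries coefficient only $2$, strictly outweighs any loop savings the extra same-colored adjacency could provide. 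Consequently, in any $\tau$ with $c(\tau)+\varepsilon(\tau)<4n^r(r-1)+1$ every same-colored region is a disjoint simple cycle with the sequential labelling $0,1,2,\dots$ of \cref{fig:eprchain}, $h_{\text{loop}}$ reaches its minimum $2n^r(r-1)$ per copy only if every such cycle is a genuine loop rather than an open path, and the tile rules together with this forced labelling prevent any loop from turning, since a turn creates a site whose incident numbering cannot be completed and triggers an extra unit from $h_{\text{tile}}$ or $h_{\text{EPR}}$. Finally $h_{\text{copy}}$, of coefficient $1$, forces the first-copy and second-copy families of lines to run along two \emph{distinct} coordinate axes. Running this bookkeeping with the chosen coefficients $8,16,2,1$ shows that every deviation from a perfect striped configuration raises $c(\tau)+\varepsilon(\tau)$ by at least $1$; arranging the monogamy penalty to beat the loop savings and accounting correctly for the net effect of a single defect over all of its incident edges is the step I expect to be the main obstacle.

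For a perfect striped $\tau$ the lattice is partitioned into the $n^{r-2}$ two-dimensional $n\times n$ tori spanned by the two chosen axes, and on each the directed labelling fixes a consistent left/right and up/down. The conditioning projectors $\ketbra{i,i+1 \bmod 3}$ and $\ketbra{i,i-1 \bmod 3}$ in $h_{\text{RI}}$ and $v_{\text{RI}}$ then reproduce exactly $h_{\text{TI}}$ and $Sh_{\text{TI}}S$ on, respectively, horizontally and vertically adjacent $\mcH_{\text{2D}}$ registers inside each plane, so $H|_{\tau}=4n^r(r-1)\,\eye+\sum_{P}H_{\text{TI}}^{(P)}$ with the $H_{\text{TI}}^{(P)}$ acting on disjoint site sets. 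Since $x\notin L$, \cref{thm:GI} gives $E_0(H_{\text{TI}}^{(P)})\ge\Omega(1/n^3)$ for each of the $n^{r-2}$ planes, so $E_0(H|_{\tau})\ge 4n^r(r-1)+n^{r-2}\cdot\Omega(1/n^3)$, which with the parameters of \cref{thm:GI} chosen appropriately is at least $4n^r(r-1)+1$. Taking the minimum over all $\tau$ of the bounds from the two cases yields $E_0(H)\ge 4n^r(r-1)+1$, which is \cref{lemma:soundness}.
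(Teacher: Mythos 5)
Your overall strategy matches the paper's: block-diagonalize $H$ over the classical tile registers, use monogamy of entanglement (your bound $\bra{\Phi^+}\rho_{AB}\ket{\Phi^+}+\bra{\Phi^+}\rho_{AC}\ket{\Phi^+}\le 3/2$ is exactly Fact~\ref{fact:eprlb} rescaled) to force same-colored regions into sequentially numbered loops, use the tile rules to forbid turns, use $h_{\text{copy}}$ to separate the two axes, and invoke Theorem~\ref{thm:GI} on the resulting $n^{r-2}$ disjoint planes. However, the step you yourself flag as ``the main obstacle'' is a genuine gap, and it is precisely where the paper does its real work. The assertion that every deviation from a perfect striped configuration raises the energy by at least $1$ does not follow just from ``the monogamy penalty outweighs the loop savings'': a site with many same-colored neighbours collects a $h_{\text{loop}}$ saving per neighbour, while you have only exhibited a penalty per conflicting \emph{pair} of EPR constraints, so one must check that the penalties accumulate fast enough and are not double-counted across sites. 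The paper closes this with the explicit bound of Eq.~\ref{eq:h1lb}: for each site $u$, group its $n_u$ same-colored neighbours into $\lfloor n_u/3\rfloor$ triples; within a triple either $h_{\text{tile}}$ is violated, or, by pigeonhole on the two numbers distinct from $u$'s, two of the three neighbours carry the same number and hence both claim the \emph{same} $\sigma$-qubit of $u$, costing $4$ by Fact~\ref{fact:eprlb}; double counting is avoided by charging only the $A^{u,\cdot}$ half of $h_{\text{EPR}}$ to $u$. The resulting per-site function $-n_u+4\lfloor n_u/3\rfloor$ is uniquely minimized at $n_u=2$ (value $-2$) and is at least $-1$ otherwise, which is exactly what makes a single defect cost at least $1$. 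Without this (or an equivalent) computation the defective case is not closed.

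A second, smaller omission: between ``straight loops along two distinct axes'' and ``partition into $n^{r-2}$ two-dimensional tori with a consistent left/right and up/down,'' the paper needs two further claims --- that all loops of a given copy run along the \emph{same} coordinate axis (uniform direction), and that parallel loops are numbered consistently with one another --- each enforced by an additional tile-rule violation. You assume both implicitly when you write that a striped $\tau$ partitions the lattice into tori with a well-defined orientation; they require their own (short) arguments. Your treatment of the perfect striped case and the final assembly via block-diagonality otherwise agree with the paper.
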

Our general strategy will be to first lower bound the energy of any state with a given classical tiling. We call these ``tile states" and define them as follows:
\begin{definition}[Tile state]
    A \textbf{tile state} is a state $\ket{\psi_{c,c'}} = \ket{c} \otimes \ket{c'} \otimes \ket{\phi}$ where $c,c' \in T_1^{N} \times T_2^{N}$ and $\ket{\phi} \in (\mcH_{\sigma_1} \otimes \mcH_{\sigma_2} \otimes \mcH'_{\sigma_1} \otimes \mcH'_{\sigma_2} \otimes \mcH_{\text{2D}})^{\otimes N}$. The notation $\ket{\psi_c}$ is used whenever only one of the tilings is relevant.
\end{definition}
Lower bounding the energy of an arbitrary state follows straightforwardly, since each of the Hamiltonian terms is diagonal with respect to the tile Hilbert spaces.

We start by first establishing the fact that a qubit cannot be in two EPR pairs at once.
\begin{fact}
    \label{fact:eprlb}
    Consider a Hamiltonian on three qubits $u$, $v$ and $w$ defined by $(\frac{\eye - \ketbra{\Phi^+}}{2})_{u,v} + (\frac{\eye - \ketbra{\Phi^+}}{2})_{v,w}$. By direct computation this has a minimum eigenvalue of $1/4$.
\end{fact}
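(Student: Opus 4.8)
The plan is to recognize that the three‑qubit operator in the statement is, after an affine rescaling, a sum of two orthogonal projectors, and then to bound the top eigenvalue of such a sum in the standard way. Write $P \defeq \ketbra{\Phi^+}{\Phi^+}_{uv}\otimes\eye_w$ and $Q \defeq \eye_u\otimes\ketbra{\Phi^+}{\Phi^+}_{vw}$, so that the Hamiltonian of \cref{fact:eprlb} is $H = \eye - \tfrac12(P+Q)$. Hence $E_0(H) = 1 - \tfrac12\,\lambda_{\max}(P+Q)$, and the whole claim reduces to showing $\lambda_{\max}(P+Q) = \tfrac32$.

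For the lower bound on $E_0(H)$ I would use the elementary inequality $\lambda_{\max}(P+Q)\le 1+\norm{PQ}$ valid for any two orthogonal projectors: if $\ket v$ is a top eigenvector with eigenvalue $\lambda$, then projecting the relation $(P+Q)\ket v=\lambda\ket v$ with $Q$ and then with $P$ yields $QPQ\ket v=(\lambda-1)^2Q\ket v$; since $QPQ=(PQ)^\dagger(PQ)$ has norm $\norm{PQ}^2$ and $\lambda\ge\lambda_{\max}(P)=1$, we get either $Q\ket v=0$ (forcing $\lambda=1$) or $(\lambda-1)^2\le\norm{PQ}^2$, so $\lambda\le 1+\norm{PQ}$ either way. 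The quantitative heart is then the computation of $\norm{PQ}$, which is exactly where the monogamy content lives: the range of $P$ consists precisely of the vectors $\ket{\Phi^+}_{uv}\otimes\ket\xi_w$, and a one‑line calculation gives $Q\!\left(\ket{\Phi^+}_{uv}\otimes\ket\xi_w\right)=\tfrac12\,\ket\xi_u\otimes\ket{\Phi^+}_{vw}$, of squared norm $\tfrac14\norm{\xi}^2$. Since this holds for every $\ket\xi$ and $PQP$ is supported on $\operatorname{ran}P$, this shows $PQP=\tfrac14 P$, hence $\norm{PQ}=\tfrac12$ and $\lambda_{\max}(P+Q)\le\tfrac32$, giving $E_0(H)\ge\tfrac14$.

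For the matching upper bound I would simply exhibit the state $\ket v=\tfrac1{\sqrt3}\bigl(\ket{\Phi^+}_{uv}\ket0_w+\ket0_u\ket{\Phi^+}_{vw}\bigr)$; using $P\!\left(\ket0_u\ket{\Phi^+}_{vw}\right)=\tfrac12\ket{\Phi^+}_{uv}\ket0_w$ together with the $u\leftrightarrow w$ symmetry one checks $\langle v|P+Q|v\rangle=\tfrac32$, hence $\langle v|H|v\rangle=\tfrac14$, completing the determination $E_0(H)=\tfrac14$. I do not expect a genuine obstacle here: the Hilbert space is only $8$‑dimensional, so the fact can equally well be verified by brute force, block‑diagonalizing $H$ via the conserved parity $Z_uZ_vZ_w$ and the $u\leftrightarrow w$ swap down to $2\times2$ blocks. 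The only mildly non‑obvious point is noticing the reduction to a two‑projector norm bound; after that, the identity $PQP=\tfrac14 P$ — the statement that the middle qubit $v$ cannot be perfectly EPR‑correlated with both neighbors at once — does all the work.
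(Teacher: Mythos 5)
Your proof is correct, and it is genuinely different from what the paper does: the paper offers no argument at all beyond the phrase ``by direct computation,'' i.e.\ it implicitly asks the reader to diagonalize the $8\times 8$ matrix (or its $2\times2$ symmetry-reduced blocks, as you note at the end). You instead give a structural argument: writing $H=\eye-\tfrac12(P+Q)$ for the two EPR projectors, invoking the two-projector bound $\lambda_{\max}(P+Q)\le 1+\norm{PQ}$, and computing $PQP=\tfrac14 P$ so that $\norm{PQ}=\tfrac12$; the matching state $\tfrac1{\sqrt3}\bigl(\ket{\Phi^+}_{uv}\ket0_w+\ket0_u\ket{\Phi^+}_{vw}\bigr)$ is in fact an exact eigenvector with eigenvalue $\tfrac32$, so both bounds are tight. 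I checked the key computations ($Q(\ket{\Phi^+}_{uv}\ket\xi_w)=\tfrac12\ket\xi_u\ket{\Phi^+}_{vw}$, the case analysis when $Q\ket v=0$, and the overlap $\braket{\Phi^+_{uv}0_w}{0_u\Phi^+_{vw}}=\tfrac12$ that normalizes your test state) and they are all right; the only step you wave at is $P(\ket\xi_u\ket{\Phi^+}_{vw})=\tfrac12\ket{\Phi^+}_{uv}\ket\xi_w$, which indeed follows from the $u\leftrightarrow w$ symmetry you cite. What your route buys is generality and transparency: it isolates the identity $PQP=\tfrac14 P$ as the precise quantitative form of ``the middle qubit cannot be maximally entangled with both neighbors,'' and it would survive replacing $\ket{\Phi^+}$ by any maximally entangled state or higher local dimension. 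What it costs is that it is heavier machinery than strictly needed for a single $8$-dimensional verification, which is presumably why the paper is content to state this as a Fact.
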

We next focus on one copy of the tilings at a time and define the following notation.

\begin{definition}
    For a given site $u$, let $n_u$ be the number of $u$'s neighbors that are tiled with the same color as $u$. 
\end{definition}
With these in hand, we can now lower bound the energy of any tile state in terms of the number of same-colored neighbors of each site.
\begin{claim}
    \begin{align}
    \label{eq:h1lb}
    \bra{\psi_c}\sum_{\{u,v\} \in E_{\Lambda_r(n)}} (h_{\text{tile}}+h_{\text{EPR}}+h_{\text{loop}})^{u,v} \ket{\psi_c} \geq 
    2n^r r - \sum_{u \in \Lambda_r(n)}n_u +4\sum_{u \in \Lambda_r(n)} \lfloor n_u/3 \rfloor
    \end{align}
\end{claim}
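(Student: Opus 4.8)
The plan is to fix one copy of the tiles (say the first) and decompose the left-hand side into contributions from the three relevant terms, analyzing each separately before adding them up. Since $\ket{\psi_c}$ has a fixed classical tiling on $\mcH_{T_1} \otimes \mcH_{T_2}$, the terms $h_{\text{tile}}$, $h_{\text{EPR}}$, and $h_{\text{loop}}$ act as they would on a classical configuration except for the qubit parts of $h_{\text{EPR}}$, so I would handle those first. For $h_{\text{tile}}$: by definition, any tiling that violates a classical tiling rule pays at least $8$, so I may as well assume (for a lower bound) that the tiling is classically consistent, contributing $0$; if not, the bound only gets easier since $8$ dominates the corrections below. For $h_{\text{loop}}$: by the counting argument already given in the excerpt, a site $u$ with $n_u$ same-colored neighbors contributes a penalty proportional to its $2r - n_u$ differently-colored neighbors; summing $h_{\text{loop}} = 2\sum_{c \neq d}\ketbra{c,d}{c,d}$ over all edges and correcting for double counting gives exactly $\sum_u (2r - n_u) = 2n^r r - \sum_u n_u$. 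This already produces the first two terms on the right-hand side.

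Next I would handle the $h_{\text{EPR}}$ contribution, which supplies the $4\sum_u \lfloor n_u/3\rfloor$ term and is the heart of the claim. Fix a site $u$ and consider its $n_u$ same-colored neighbors. By the classical numbering constraints, consecutive same-colored neighbors are forced into the cyclic $\ZZ_3$ pattern, and each same-colored edge incident to $u$ carries a directed arrow (from the $0\to1\to2\to0$ convention) that is either incoming or outgoing at $u$, with the $h_{\text{EPR}}$ term demanding that $u$'s $\sigma_2$-qubit form an EPR pair along each outgoing edge and $u$'s $\sigma_1$-qubit form an EPR pair along each incoming edge. By Fact~\ref{fact:eprlb}, whenever two of these edges at a single site demand the same qubit be in two EPR pairs simultaneously, we pay at least $1/4$ locally from each of the two offending EPR terms; since the $h_{\text{EPR}}$ terms carry a coefficient $16$ and there are two orderings $A^{u,v}+A^{v,u}$, such a conflict translates into a penalty of at least $16 \cdot \tfrac14 \cdot 2 \cdot (\text{something}) $ — I will need to track the combinatorics carefully, but the upshot is that $k$ same-colored edges at a vertex whose arrows cannot be consistently $3$-cycled force at least $\lfloor k/3\rfloor$ "bad" incidences, each costing enough to yield the coefficient $4$. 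Concretely: among $n_u$ same-colored neighbors of $u$, the arrows split into incoming and outgoing; only if $n_u \leq 2$ can we possibly avoid a qubit being shared (one incoming uses $\sigma_1$, one outgoing uses $\sigma_2$), and in general roughly $\lfloor n_u/3\rfloor$ blocks of three force an unavoidable conflict. Summing the Fact~\ref{fact:eprlb} lower bound over these forced conflicts, weighted by the $16$ coefficient, gives at least $4\lfloor n_u/3\rfloor$ per site.

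The main obstacle will be making the $h_{\text{EPR}}$ bound rigorous: carefully arguing that $n_u$ same-colored neighbors with $\ZZ_3$-consistent numbering necessarily produce at least $\lfloor n_u/3\rfloor$ instances where one of $u$'s two qubits is demanded in two EPR pairs, and then chaining Fact~\ref{fact:eprlb} across these potentially overlapping triples without overcounting or undercounting. The subtlety is that the conflicts may not be disjoint in the edges they involve, so I would want to extract a disjoint collection of "conflicting triples" of edges at $u$ — greedily, roughly $\lfloor n_u / 3 \rfloor$ of them — and note that Fact~\ref{fact:eprlb} applies to each, with the quantum state $\ket{\phi}$ on the $\sigma$-qubits being a single global state whose restriction to any three qubits can be lower-bounded termwise (the EPR penalty terms are positive semidefinite, so their expectations add). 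Once the per-site bound $\geq (2r - n_u) + 4\lfloor n_u/3\rfloor$ is established from $h_{\text{loop}} + h_{\text{EPR}}$ on the edges touching $u$ (again correcting double counting by assigning half of each edge term to each endpoint), summing over all $u \in \Lambda_r(n)$ and adding the nonnegative $h_{\text{tile}}$ contribution yields precisely \eqref{eq:h1lb}. A final consistency check: when the tiling is the honest striped one with all chains being loops of length $n$ (a multiple of $3$), every $n_u = 2$, so the right-hand side equals $2n^r r - 2n^r + 0 = 2n^r(r-1)$, matching the $h_{\text{loop}}$ penalty computed in the completeness proof and confirming the bound is tight.
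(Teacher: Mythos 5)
Your proposal follows essentially the same route as the paper's proof: the identical edge/double-counting computation for $h_{\text{loop}}$, the observation that a tile violation's penalty of $8$ absorbs any case where the numbering argument fails, and the key step of partitioning each site's same-colored neighbors into triples, applying the pigeonhole principle to force two edges to demand the same qubit of $u$ in an EPR pair, and invoking Fact~\ref{fact:eprlb} with the coefficient $16$ to extract the penalty of $4$ per triple. The only differences are cosmetic (you phrase the pigeonhole in terms of incoming/outgoing arrows rather than shared numbers, and you flag but correctly resolve the bookkeeping of which $A^{u,v}$ terms are charged to which site), so the proposal is correct and matches the paper's argument.
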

\begin{proof}
    Recall that $h_{\text{loop}}$ gives an energy penalty of $2$ for neighboring particles that are tiled by opposite colors. Every particle has $2r$ neighbors and so if every particle is tiled with a different color than all of its neighbors then the total penalty due to this Hamiltonian term would be $2\frac{n^r(2r)}{2} = 2n^r r$. The total number of neighboring pairs with the same color tiling is $\sum_{u \in \Lambda_r(n)} \frac{n_u}{2}$. Since each of these pairs saves an energy of $2$, the total energy with respect to $h_{\text{loop}}$ applied to each edge is $2n^r r - 2\sum_{u \in \Lambda_r(n)} \frac{n_u}{2}=2n^r r - \sum_{u \in \Lambda_r(n)} n_u$. 
    
    Now for every particle $u$ we can group its same-colored neighbors into groups of three until a full group of three cannot be formed. There will be $\lfloor n_u/3 \rfloor$ such groups. For a given group of three neighbors, label the particles $v,w,z$. If $h_{\text{tile}}$ applied to edge $\{u,v\}$, $\{u,w\}$, or $\{u,z\}$ is violated then this incurs a penalty of $4$ per particle involved ($8$ in total for the edge). 
    
    If none are violated then this means at least two of the three particles, $v,w,z$, must be tiled with the same number. This is because $h_{\text{tile}}$ enforces that same-colored neighbors of $u$ are tiled with a different number than $u$ and there are only two such numbers. Without loss of generality assume that $v$ and $w$ are tiled with the same number and it is exactly one higher (mod $3$) than $u$'s number. Then we have the following bound
    \begin{align}
        E_0(h^{u,v}_{\text{EPR}}+h^{u,w}_{\text{EPR}}) &\geq  E_0(A^{u,v}+A^{u,w})\\
        &\geq 16E_0((\frac{\eye - \ketbra{\Phi^+}}{2})_{u_{\sigma_2}v_{\sigma_1}} + (\frac{\eye - \ketbra{\Phi^+}}{2})_{u_{\sigma_2}w_{\sigma_1}})\\
        &\geq 4 &\text{By Fact \ref{fact:eprlb} }
    \end{align}
    Repeating this argument for each site $u$ in the lattice does not double count energy penalties since we only used the $A^{u,v}$ term of $h_{\text{EPR}}$, and so when considering $v$ we would use the $A^{v,u}$ term instead. Therefore, either $h_{\text{tile}}$ or $h_{\text{EPR}}$ must be violated and either way a penalty of at least $4$ is added to the energy. This argument can be repeated for each group of three same-colored neighbors and so this contributes at least $4\sum_{u \in \Lambda_r(n)} \lfloor n_u/3 \rfloor$ to the energy.
\end{proof}
It follows immediately from \cref{eq:h1lb} that each particle must have exactly two same colored-neighbors. Otherwise this induces an energy penalty of at least one which is enough to imply the bound in \cref{lemma:soundness}. This is helpful since we have now shown that the tiling pattern forms loops. It still remains to show that these loops are straight. Before moving on to this, we first make the above statements rigorous as follows:
\begin{definition}
    A classical tiling $c \in T_1^{N} \times T_2^{N}$ is \textbf{looped} if $n_u = 2$ $\forall u \in \Lambda_r(n)$
\end{definition}
\begin{claim}
    \label{claim:prop1}
    If $c$ or $c'$ is not looped, then $ \bra{\psi_{c,c'}} H \ket{\psi_{c,c'}} \geq 4 n^r(r-1) + 1$.
\end{claim}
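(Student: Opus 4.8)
The plan is to reduce the claim to the bound already proved in \cref{eq:h1lb}, applied independently to the two copies of tiles, together with one elementary pointwise estimate.

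First I would split $H$ according to \cref{eq:h}. The terms $h_{\text{tile}},h_{\text{EPR}},h_{\text{loop}}$ act only on the first copy of the tile Hilbert spaces, the primed terms act only on the second copy, and $h_{\text{copy}}$, $h_{\text{RI}}$ and $v_{\text{RI}}$ are positive semidefinite --- the first because it is a projector, and the latter two because $h_{\text{TI}}$ is positive semidefinite by \cref{thm:GI}. Hence for any tile state,
\[
\bra{\psi_{c,c'}} H \ket{\psi_{c,c'}} \;\ge\; \bra{\psi_c}\!\!\sum_{\{u,v\}\in E_{\Lambda_r(n)}}\!\!(h_{\text{tile}}+h_{\text{EPR}}+h_{\text{loop}})^{u,v}\ket{\psi_c} \;+\; \bra{\psi_{c'}}\!\!\sum_{\{u,v\}\in E_{\Lambda_r(n)}}\!\!(h'_{\text{tile}}+h'_{\text{EPR}}+h'_{\text{loop}})^{u,v}\ket{\psi_{c'}}.
\]
By \cref{eq:h1lb} each summand is at least $B(c) := 2n^r r - \sum_{u\in\Lambda_r(n)}\bigl(n_u - 4\lfloor n_u/3\rfloor\bigr)$, where $n_u$ is read off from the relevant copy of the tiling.

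The crux is then the elementary inequality $n_u - 4\lfloor n_u/3\rfloor \le 2$ for every nonnegative integer $n_u$, with equality if and only if $n_u=2$, and moreover $n_u - 4\lfloor n_u/3\rfloor \le 1$ whenever $n_u \neq 2$ --- a one-line case check on $n_u \bmod 3$. It follows that $B(c) \ge 2n^r(r-1)$ for every tiling, and $B(c) \ge 2n^r(r-1)+1$ as soon as the tiling fails to be looped (some site has $n_u\neq 2$). Applying the stronger bound to whichever of $c,c'$ is not looped and the weaker one to the other copy gives $\bra{\psi_{c,c'}} H \ket{\psi_{c,c'}} \ge (2n^r(r-1)+1) + 2n^r(r-1) = 4n^r(r-1)+1$, as claimed. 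There is no real obstacle here; the only points requiring care are that the two copies' contributions genuinely add (which is fine because \cref{eq:h1lb} holds for every tile state, in particular for $\ket{\psi_{c'}}$ viewed on the second copy) and that the remaining terms, in particular the embedded $h_{\text{TI}}$, are nonnegative, which is exactly the positive-semidefiniteness guaranteed by \cref{thm:GI}.
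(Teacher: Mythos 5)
Your proposal is correct and follows essentially the same route as the paper: drop the remaining positive-semidefinite terms, apply the bound of \cref{eq:h1lb} to each copy separately, and use the pointwise fact that $-n_u+4\lfloor n_u/3\rfloor$ equals $-2$ only at $n_u=2$ and is at least $-1$ otherwise. The only cosmetic difference is that you state the two copies' contributions as an explicit sum, whereas the paper phrases it as the primed terms adding ``a penalty of at least $2n^r(r-1)$.''
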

\begin{proof}
    Without loss of generality assume that $c$ is not looped and that $c'$ is any tiling. The inequality in Eq. \ref{eq:h1lb} is minimized when $n_u=2$ for all $u \in \Lambda_r(n)$. This is because $-n_u + 4\lfloor n_u/3 \rfloor = -2$ for $n_u = 2$ and $-n_u + 4 \lfloor n_u/3 \rfloor \geq -1$ for $n_u \neq 2$ where $n_u$ is a nonnegative integer. The right-hand side of Eq. \ref{eq:h1lb} is equal to $2n^r r - 2n^r = 2n^r(r-1)$ when $\forall u$ $n_u = 2$. Therefore, when $n_u\neq2$ for some $u \in \Lambda_r(n)$, $ \bra{\psi_c}\sum_{\{u,v\} \in E_{\Lambda_r(n)}} (h_{\text{tile}}+h_{\text{EPR}}+h_{\text{loop}})^{u,v}  \ket{\psi_c} \geq 2 n^r (r-1)+1$. The terms $h'_{\text{tile}},h'_{\text{EPR}},$ and $h'_{\text{loop}}$ add a penalty of at least $2n^r(r-1)$ as we have just argued. Finally, each term in $H$ is positive semidefinite and so the energy with respect to the entire Hamiltonian is lower bounded by the energy with respect to a subset of the terms. The claim then follows.
    
\end{proof}

We now must show that these loops are in fact straight and do not contain any turns.
\begin{definition}
    Let $g: \Lambda_r(n)^2 \rightarrow \mathbb{N}$ be a function that outputs the number of coordinates that differ between two lattice sites.
\end{definition}
\begin{definition}
    A classical tiling $c \in T_1^{N} \times T_2^{N}$ has a \textbf{turn} if there exists three particles $u$, $v$, and $w$ where the following is true: $u$ and $w$ are neighbors of $v$ (i.e. $d(u,v) = d(v,w) = 1$), $u$, $v$, and $w$ are all tiled by the same color, and $g(u,w) =2$.
\end{definition}
We can now penalize turns by using the tiling rule that different color neighbors must have the same number. This is because if there is a turn then there exists some neighboring site outside of the loop that borders two sites right where the turn occurs. This causes a penalty because both sites in the loop must have the same number as the site outside of the loop, which contradicts the sequential numbering of the loop. This is argued in more detail below.
\begin{claim}
    \label{claim:noprop1prop2}
    If $c$ or $c'$ is looped but has a turn, then
    \begin{align}
        \bra{\psi_{c,c'}} H \ket{\psi_{c,c'}} \geq 4 n^r (r-1)+4
    \end{align}
\end{claim}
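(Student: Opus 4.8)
The plan is to reduce the statement to a purely local estimate: a turn inside a looped copy costs $4$ extra units of energy over and above the $h_{\text{loop}}$ baseline, and all of that extra cost comes from $h_{\text{tile}}$ and $h_{\text{EPR}}$ acting on the handful of edges surrounding the turn. Without loss of generality I would assume the turn lies in the first copy, so $c$ is looped and has a turn (the other case is symmetric). Since $c$ is looped, $h_{\text{loop}}$ is diagonal in the tile basis and evaluates to exactly $2n^r(r-1)$ on $\ket{\psi_{c,c'}}$ (there are $rn^r$ edges and precisely $n^r$ same-colored ones). By the reasoning in the proof of \cref{claim:prop1} — using $-n_u + 4\lfloor n_u/3\rfloor \ge -2$ for every nonnegative integer $n_u$ — the second-copy terms $h'_{\text{tile}}+h'_{\text{EPR}}+h'_{\text{loop}}$ contribute at least $2n^r(r-1)$ regardless of $c'$. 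Since $h_{\text{copy}}$, $h_{\text{RI}}$ and $v_{\text{RI}}$ are positive semidefinite, it then suffices to show $\bra{\psi_{c,c'}}\sum_{\{p,q\}\in E_{\Lambda_r(n)}}(h_{\text{tile}}+h_{\text{EPR}})^{p,q}\ket{\psi_{c,c'}}\ge 4$; this is disjoint from the $h_{\text{loop}}$ accounting because $h_{\text{loop}}$, $h_{\text{tile}}$, $h_{\text{EPR}}$ are distinct summands. Next I would fix coordinates at the turn: write it as sites $v$, $u=v+\delta$, $w=v+\delta'$ of a common color, where $\delta$ and $\delta'$ are $\pm$ unit lattice vectors in two distinct coordinate directions (this is exactly what $d(u,v)=d(v,w)=1$ together with $g(u,w)=2$ says). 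Let $x:=v+\delta+\delta'$; since $n\ge 3$, this $x$ is a genuine lattice neighbor of both $u$ and $w$ and is distinct from $u,v,w$. I would then split on the color of $x$ and, in each case, exhibit a local violation worth at least $4$.

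In Case A, where $x$ has the same color as $v$, I would first argue that the same-colored connected component through $v$ is forced to be the $4$-cycle $v-u-x-w-v$: indeed $x$ is then a same-colored neighbor of both $u$ and $w$, and because $c$ is looped each of $u,v,w,x$ has exactly two same-colored neighbors, which pins the component down to these four sites joined in a cycle. On this cycle, distinct adjacent $T_2$-labels are required or $h_{\text{tile}}$ already costs $8$; and to avoid a monogamy violation at every vertex, the directed edges induced by $0\to1\to2\to0$ would have to flow consistently all the way around the loop, which forces a net label increment of $\pm 4\not\equiv 0\pmod 3$ on return to the start — impossible. Hence at some vertex $y$ both incident same-colored edges require the \emph{same} qubit ($\sigma_1$ or $\sigma_2$) of $y$, putting that qubit into two EPR pairs at once; restricting $h_{\text{EPR}}$ to the two edges at $y$ and invoking \cref{fact:eprlb} yields a penalty of at least $16\cdot\tfrac{1}{4}=4$.

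In Case B, where $x$ has a different color from $v$ (hence from $u$ and $w$), if $h_{\text{tile}}$ is violated on any of $\{v,u\},\{v,w\},\{x,u\},\{x,w\}$ we pay $8$ and are done. Otherwise Rule~2 on $\{x,u\}$ and $\{x,w\}$ forces $\mathrm{num}(u)=\mathrm{num}(x)=\mathrm{num}(w)$, while Rule~1 on $\{v,u\}$ and $\{v,w\}$ forces this common value to differ from $\mathrm{num}(v)$; so $\mathrm{num}(u)=\mathrm{num}(w)\equiv\mathrm{num}(v)\pm 1\pmod 3$ with the \emph{same} sign. Consequently the directed edges $v$--$u$ and $v$--$w$ both require the same qubit of $v$ (its $\sigma_2$ if the sign is $+$, its $\sigma_1$ if $-$) to form an EPR pair, so that qubit is again in two EPR pairs, and \cref{fact:eprlb} applied to $h_{\text{EPR}}$ on $\{v,u\}$ and $\{v,w\}$ gives at least $4$.

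Combining the pieces gives $\bra{\psi_{c,c'}}H\ket{\psi_{c,c'}}\ge(2n^r(r-1)+4)+2n^r(r-1)=4n^r(r-1)+4$, as claimed. I expect the main obstacle to be Case A: one must check rigorously that the looped hypothesis really does confine the component to a single $4$-cycle (so that no larger configuration slips through) and that no labeling of a $4$-cycle by $\{0,1,2\}$ can satisfy all the EPR constraints simultaneously — the latter is a short parity argument mod $3$, but it needs to be phrased carefully in terms of which $\sigma$-qubit of a vertex each directed edge consumes. A secondary point to watch throughout is that the $4$ units charged near the turn are genuinely additional to the $h_{\text{loop}}$ count of $2n^r(r-1)$, which is fine because $h_{\text{loop}}$, $h_{\text{tile}}$, $h_{\text{EPR}}$ are independent Hamiltonian terms.
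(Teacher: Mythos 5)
Your proposal is correct and follows essentially the same route as the paper: introduce the fourth site $z=v+\delta+\delta'$ completing the square, split on whether it shares the loop's color, and in each case extract a local penalty of at least $4$ from $h_{\text{tile}}$ (a $4$-cycle cannot be cyclically numbered over $\mathbb{Z}_3$) or from $h_{\text{EPR}}$ via \cref{fact:eprlb}. If anything, your accounting for the second copy — bounding $h'_{\text{tile}}+h'_{\text{EPR}}+h'_{\text{loop}}$ below by $2n^r(r-1)$ via the inequality $-n_u+4\lfloor n_u/3\rfloor\ge -2$ rather than attributing it to $h'_{\text{loop}}$ alone — is slightly more careful than the paper's phrasing.
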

\begin{proof}
    Once again without loss of generality, assume that $c$ is looped and has a turn and that $c'$ is any tiling. If $c$ has a turn then there exists $u$, $v$, and $w$ that are all tiled by the same color and $g(u,w) = 2$. Without loss of generality, let $u_1=0$, $u_2=0$, $v_1=0$, $v_2=1$, $w_1=1$, $w_2=1$. This simplification is for clarity but note that the following argument works for any coordinates such that $d(u,v) = d(v,w)  = 1$ and $g(u,w) = 2$. Consider a fourth site $z$ at coordinates $z_1=1$, $z_2=0$. Note that $d(z,u)  = d(z,w) =1$ and so it neighbors $u$ and $w$. This situation is depicted in \cref{fig:turn}. If $z$ is tiled by the same color as $u$, $v$, and $w$ then it would form a loop of length four. This violates $h_\text{tile}$ since the loop is not a multiple of three and so the numbers can not sequentially wrap around $\mathbb{Z}_{3}$ (see \cref{fig:turna}). If $z$ is tiled with a different color than $u$, $v$, and $w$ then it must be tiled with the same number as $u$ and $w$ since $h_{\text{tile}}$ enforces that different colored neighbors must have the same number. However, as we have argued previously, a site cannot have two same-colored neighbors that are tiled with the same number as each other since this causes the EPR constraint to be violated (see \cref{fig:turnb}). In either case, there is an energy penalty of at least $4$. In addition, the $h_{\text{loop}}$ and $h'_{\text{loop}}$ terms together add a penalty of $4 n^r (r-1)$. Finally, noting the positive semidefiniteness of each Hamiltonian term concludes the proof.
\end{proof}

\begin{figure}[t!]
    \centering
    \begin{subfigure}[t]{0.15\textwidth}
        \centering
        \includegraphics[width=\textwidth]{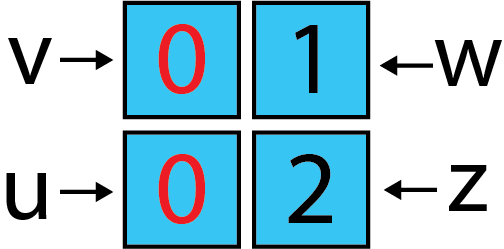}
        \caption{}
        \label{fig:turna}
    \end{subfigure}
    \hspace{0.1\textwidth}
    \begin{subfigure}[t]{0.15\textwidth}
        \centering
        \includegraphics[width=\textwidth]{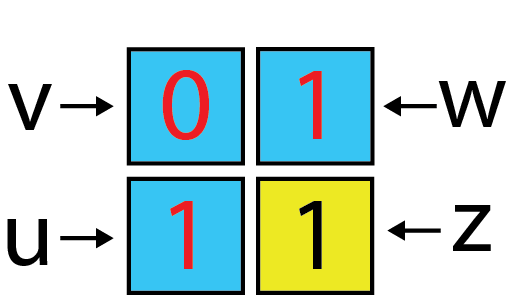}
        \caption{}
        \label{fig:turnb}
    \end{subfigure}
    \caption{Two examples of how energy penalties can arise when there is a turn in the loop. Here, the loop consists partially of $u$, $v$, and $w$ which contains a turn since $u$ and $w$ differ in more than one coordinate. In (a), $z$ is colored the same as the rest but this results in a penalty since a loop of size $4$ cannot be numbered cyclically around $\mathbb Z_3$. This results in the illegal configuration of $u$ and $v$ having the same color and the same number. In (b), $z$ is colored differently but this leads to part of the loop being numbered as $1,0,1$, which is also illegal as depicted in \cref{fig:eprupdown}.}
    \label{fig:turn}
\end{figure}

So far we have given a sufficient lower bound for any tile states that do not consist of only straight loops. It will be helpful to make a few more observations on the structure of these 1D loops. 
\begin{definition}
     A classical tiling $c \in T_1^{N} \times T_2^{N}$ is \textbf{uniformly directed} if it is looped without turns and each 1D loop is oriented along the same dimension.
\end{definition}

\begin{claim}
    If $c$ or $c'$ is looped without turns but not uniformly directed then 
    \begin{align}
        \bra{\psi_{c,c'}} H \ket{\psi_{c,c'}} \geq 4 n^r (r-1)+8
    \end{align}
\end{claim}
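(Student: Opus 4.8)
The plan is to find, wherever two loops of $c$ with different orientations come near each other, a unit square of four lattice sites on which the classical tiling rules cannot all hold, and then to pay for the resulting $h_{\text{tile}}$ violation on top of the usual $h_{\text{loop}}$ cost. First observe that the statement is vacuous for $r=2$: a straight loop in $\Lambda_2(n)$ is a full row or a full column, and a row-loop and a column-loop necessarily share a site, so a looped, turn-free tiling of $\Lambda_2(n)$ has loops of one orientation only and is already uniformly directed. Hence assume $r\ge 3$, and without loss of generality suppose $c$ is looped without turns but not uniformly directed (the case of $c'$ is symmetric). Since $c$ is looped and turn-free, each site $u$ lies on a straight loop with some orientation $\mathrm{dir}(u)\in[r]$, and the two same-coloured neighbours of $u$ are exactly $u+e_{\mathrm{dir}(u)}$ and $u-e_{\mathrm{dir}(u)}$. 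Not being uniformly directed means $\mathrm{dir}$ is non-constant; since $(\Lambda_r(n),E_{\Lambda_r(n)})$ is connected, there is an edge $\{u,w\}$ with $\mathrm{dir}(u)=i\neq j=\mathrm{dir}(w)$. Writing $w=u\pm e_k$, we cannot have $k=i$ (else $w$ is a loop-neighbour of $u$, forcing $\mathrm{dir}(w)=i$) nor $k=j$ (else $u$ is a loop-neighbour of $w$, forcing $\mathrm{dir}(u)=j$), so $i,j,k$ are pairwise distinct; this is the step that needs $r\ge 3$.

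Next I would analyse the unit square with corners $u$, $u+e_i$, $u+e_i+e_k$, and $w=u+e_k$, which lies in the $(i,k)$-plane. Using ``looped'' and ``turn-free'': $u+e_i$ is a loop-neighbour of $u$, hence has the same colour $C$; $w$ is a neighbour of $u$ in the non-loop direction $k$, hence has a colour $C'\neq C$; and $u+e_i+e_k$ is a neighbour of $u+e_i$ in direction $k\neq i$ and of $w$ in direction $i\neq j$, hence is a loop-neighbour of neither, so its colour lies outside $\{C,C'\}$ and must be the third colour $C''$. Thus exactly one of the four edges of this square, the edge $\{u,u+e_i\}$, joins equal colours, and the other three join distinct colours. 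I would then read off the numbers that the fixed tiling $c$ puts on these corners, using only the rules encoded by $h_{\text{tile}}$ and nothing about how the loops themselves are numbered: if none of the four edges violated $h_{\text{tile}}$, then the rule ``different colours forces equal number'' applied along the three different-colour edges would give $\mathrm{num}(u)=\mathrm{num}(w)=\mathrm{num}(u+e_i+e_k)=\mathrm{num}(u+e_i)$, while the rule ``equal colours forces different number'' applied to $\{u,u+e_i\}$ would give $\mathrm{num}(u)\neq\mathrm{num}(u+e_i)$, a contradiction. So at least one of these four edges carries an illegal tile pair and contributes $8$ to the energy through the unprimed $h_{\text{tile}}$.

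Finally I would add everything up. Because $c$ is looped, $n^r$ of the $rn^r$ edges join equal colours on the unprimed copy and the remaining $(r-1)n^r$ edges each cost $2$ under $h_{\text{loop}}$, for a total of exactly $2(r-1)n^r$ from the unprimed $h_{\text{loop}}$ terms; combined with the $8$ from $h_{\text{tile}}$ and the positive semidefiniteness of $h_{\text{EPR}}$, the unprimed copy contributes at least $2(r-1)n^r+8$. The analogue of \cref{eq:h1lb} for the primed terms, together with the estimate $-m+4\lfloor m/3\rfloor\ge -2$ for nonnegative integers $m$ used in the proof of \cref{claim:prop1}, shows the primed copy contributes at least $2(r-1)n^r$ regardless of $c'$; and $h_{\text{copy}},h_{\text{RI}},v_{\text{RI}}$ are positive semidefinite. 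Since on a tile state all of these groups of terms act on classically disjoint registers, their expectations add, giving $\bra{\psi_{c,c'}}H\ket{\psi_{c,c'}}\ge 4(r-1)n^r+8$. I expect the main obstacle to be the geometric step: confirming that a non-uniformly-directed looped, turn-free tiling genuinely produces the three-pairwise-distinct-directions configuration at some edge, rather than merely loops of different orientations that never come close enough to interact; the numbering contradiction itself is a repeat of the argument in \cref{claim:noprop1prop2}.
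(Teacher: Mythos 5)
Your proof is correct and follows essentially the same route as the paper's: you locate the same unit square (two same-coloured sites on a loop adjacent to two sites carrying the two remaining colours), derive the same contradiction between ``different colours force equal numbers'' around three sides and ``equal colours force different numbers'' on the fourth, and then add the standard $h_{\text{loop}}$ accounting of $2(r-1)n^r$ per copy. The only substantive difference is that you rigorously establish that this square must occur --- via the pairwise-distinct directions $i,j,k$ at an edge where $\mathrm{dir}$ changes, which also shows the statement is vacuous for $r=2$ --- whereas the paper asserts the configuration's existence by reference to a figure.
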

\begin{proof}
    If the tiling is not uniformly directed then the following situation will necessarily arise, which is illustrated in \cref{fig:unidir}. Without loss of generality, consider a square of sites within the lattice such that the top two sites are the same color and thus a part of the same 1D chain, but the bottom two sites have two different colors from the top and from each other. Since different color tiles must have the same numberings, then all four squares would be required to have the same number. However, since the top two have the same color, they are required to have different numbers and so it is impossible to assign numbers that do not incur any penalties.
\end{proof}

\begin{figure}
    \centering
    \includegraphics[width=0.08\columnwidth]{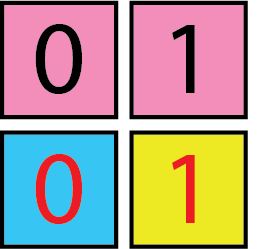}
    \caption{This configuration always arises if the tiling is not uniformly directed since otherwise each loop is always pointed in the same direction. This causes a rule violation since the blue and yellow tiles must have the same number but are forced to hold different numbers due to the red tiles.}
    \label{fig:unidir}
\end{figure}

One final property we will need is that each 1D loop is numbered consistently.

\begin{definition}
     A classical tiling $c \in T_1^{N} \times T_2^{N}$ is \textbf{numbered consistently} if it is uniformly directed towards a given dimension $d$ and every site with the same $d$th coordinate value has the same number.
\end{definition}

\begin{claim}
    If $c$ or $c'$ is uniformly directed, but not numbered consistently then 
    \begin{align}
        \bra{\psi_{c,c'}} H \ket{\psi_{c,c'}} \geq 4 n^r (r-1)+8
    \end{align}
\end{claim}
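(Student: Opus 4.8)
The plan is to follow the bookkeeping of the preceding claims. Without loss of generality assume $c$ is uniformly directed but not numbered consistently, and let $c'$ be an arbitrary tiling. By \cref{eq:h1lb} the terms $h'_{\text{tile}}+h'_{\text{EPR}}+h'_{\text{loop}}$, summed over all edges, contribute at least $2n^r(r-1)$ for any $c'$. Because $c$ is uniformly directed it is in particular looped, so every site has exactly $2r-2$ differently-colored neighbors and $h_{\text{loop}}$ summed over all edges contributes exactly $2n^r(r-1)$ on copy $c$; this count does not involve the numbers or the $\sigma$-qubits, so any violated $h_{\text{tile}}$ edge on copy $c$ adds its full energy $8$ on top, since $h_{\text{tile}}$ and $h_{\text{EPR}}$ are positive semidefinite. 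All remaining terms of $H$ are positive semidefinite. Hence it is enough to produce one violated $h_{\text{tile}}$ edge on copy $c$.

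To find it, I would first use that $c$ is uniformly directed along some dimension $d$: every monochromatic region is a straight loop running along $d$, these loops are indexed by $\Lambda_{r-1}(n)$ (the coordinates other than $d$), and each loop meets every value of the $d$th coordinate exactly once. For a loop $\ell$ let $f_\ell\colon \mathbb{Z}_n \to \mathbb{Z}_3$ be its numbering function, sending $j$ to the number carried by the unique site of $\ell$ with $d$th coordinate $j$. Then $c$ is numbered consistently precisely when all of the $f_\ell$ coincide, so our hypothesis yields two loops with distinct numbering functions. Since the adjacency graph on loops is the nearest-neighbor graph of $\Lambda_{r-1}(n)$, which is connected, partitioning the loops by their numbering function produces two \emph{adjacent} loops $\ell \neq \ell'$ with $f_\ell \neq f_{\ell'}$.

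Finally I would read off the violation. For each $j$, the site $a$ of $\ell$ and the site $b$ of $\ell'$ at $d$th coordinate $j$ are lattice neighbors. Since $c$ is looped, the two same-colored neighbors of $a$ are already its $\pm\hat d$ neighbors inside $\ell$, so $b$ has a color different from $a$; hence $\ell$ and $\ell'$ are differently colored and each such rung is a different-color edge, on which the second tiling rule encoded in $h_{\text{tile}}$ forces $a$ and $b$ to carry the same number, at an energy cost of $8$ otherwise. If no rung were violated we would get $f_\ell(j)=f_{\ell'}(j)$ for every $j$, contradicting $f_\ell \neq f_{\ell'}$; so at least one rung is a violated $h_{\text{tile}}$ edge. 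Adding the contributions — $2n^r(r-1)+8$ from copy $c$, $2n^r(r-1)$ from copy $c'$, and nonnegative remaining terms — gives $\bra{\psi_{c,c'}} H \ket{\psi_{c,c'}} \geq 4n^r(r-1)+8$.

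The one place to be careful is the assertion that adjacent loops are differently colored: if $\ell$ and $\ell'$ shared a color, a site of $\ell$ would have three same-colored neighbors, contradicting that $c$ is looped, so I must invoke $n_u=2$ before concluding the rungs are different-color edges. Everything else — the connectivity argument on $\Lambda_{r-1}(n)$ and valuing a single violated $h_{\text{tile}}$ edge at $8$ — is routine, and I do not expect a serious obstacle beyond this point.
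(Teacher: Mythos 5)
Your proof is correct and follows essentially the same route as the paper's: both arguments use the rule that differently-colored neighbors must share a number to propagate numbers transversally to the loop direction $d$, so that a failure of consistent numbering forces a violated $h_{\text{tile}}$ edge worth $8$ on top of the $4n^r(r-1)$ baseline. Your version merely makes explicit the details the paper leaves implicit (the connectivity argument on $\Lambda_{r-1}(n)$, the check that adjacent loops are differently colored via $n_u=2$, and the energy bookkeeping), which is a faithful elaboration rather than a different approach.
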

\begin{proof}
    Consider the $r-1$ dimensional sublattice of sites that each have the same $d$th coordinate value. Each pair of neighbors in this lattice must have different colors. Otherwise, the coloring would not be uniformly directed towards the $d$th dimension since this means there is some 1D loop that points in a different direction. Since each pair of numbers has different colors, they all must have the same number, i.e. the tiles must be numbered consistently. Otherwise, a penalty of at least $8$ is incurred.
\end{proof}

All that remains is to now lower bound the ground-state energy when both tilings are numbered consistently.

\begin{claim}
    \label{claim:noprop1noprop2}
    Let $c$ and $c'$ be numbered consistently. Let $L$ be a language in $\qmaexp$ and let $x \in \{0,1\}^*$ be a problem instance. Let $H = R(L,x,r)$
    If $x \not \in L$, then $\bra{\psi_{c,c'}} H \ket{\psi_{c,c'}} \geq 4 n^r (r-1) + \Omega(n^{r-5})$.
\end{claim}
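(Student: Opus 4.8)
The plan is to split on whether the two copies of tiles are directed along the same lattice dimension or along two different ones, and in the latter case to first discard any tiling that still violates a classical rule before invoking \cref{thm:GI}. Throughout, recall that ``numbered consistently'' already forces $n_u=2$ in both copies for every site, so the $h_{\text{loop}}$ and $h'_{\text{loop}}$ terms together contribute exactly $4n^r(r-1)$ to $\bra{\psi_{c,c'}}H\ket{\psi_{c,c'}}$ (as computed in the proof of \cref{claim:prop1}) and every other term is positive semidefinite. If the first copy is directed along the \emph{same} dimension $d$ as the second, then each of the $n^r$ edges of $\Lambda_r(n)$ pointing along $d$ joins two sites that agree in color on both copies, so $h_{\text{copy}}$ is violated on all of them; this adds $n^r$, giving $\bra{\psi_{c,c'}}H\ket{\psi_{c,c'}}\geq 4n^r(r-1)+n^r$, which already exceeds the claimed bound.

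So assume the two copies are directed along distinct dimensions, which by rotation invariance we may take to be $1$ and $2$. If either copy violates some classical tile rule, or has a loop whose numbering is not strictly monotone around $\mathbb{Z}_3$ (forcing the EPR rule to fail through the ``$0,1,0$'' pattern of \cref{fig:eprupdown}), then an entire $(r-1)$-dimensional sublattice of $n^{r-1}$ edges is violated, adding $\Omega(n^{r-1})$ on top of the $4n^r(r-1)$ baseline, which again suffices. Hence we may further assume $h_{\text{tile}},h_{\text{EPR}},h'_{\text{tile}},h'_{\text{EPR}}$ and $h_{\text{copy}}$ are simultaneously satisfiable at zero energy; the required EPR pairs then form a perfect matching on the $\sigma$-qubits (\cref{fig:eprchain}), which live on a tensor factor disjoint from the $\mcH_{\text{2D}}$ factors. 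Consequently, on the subspace fixed by such a tiling, $H$ equals $4n^r(r-1)\,\eye$, plus an operator supported on the $\sigma$-qubits whose minimum over $\ket{\phi}$ is $0$, plus $\sum h_{\text{RI}}+\sum v_{\text{RI}}$, which is supported entirely on the $\mcH_{\text{2D}}$ factors; so it remains only to lower bound the ground energy of $\sum h_{\text{RI}}+\sum v_{\text{RI}}$ on this tiling subspace.

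For that, observe that the lattice decomposes into $n^{r-2}$ disjoint two-dimensional slices spanned by the first two coordinates, whose $\mcH_{\text{2D}}$ factors are pairwise disjoint, and that $h_{\text{RI}}$ (resp.\ $v_{\text{RI}}$) fires on an edge exactly when its endpoints share a first-copy (resp.\ second-copy) color, i.e.\ exactly on the edges along dimension $1$ (resp.\ $2$); thus no edge joining two distinct slices carries any $h_{\text{RI}}$ or $v_{\text{RI}}$ weight. Within a single slice, since the tiling is numbered consistently the first-copy numbering is a fixed monotone function of the first coordinate, so on every horizontal edge the \emph{same} one of the two summands of $h_{\text{RI}}$ is selected across the whole slice; using the reflection symmetry of $h_{\text{RI}}$ and $v_{\text{RI}}$ noted after their definition (and that a reflection of the $2$D lattice is an automorphism), the operator induced on each slice is isospectral to the $H_{\text{TI}}$ of \cref{thm:GI}. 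As the slice Hamiltonians act on disjoint tensor factors and hence commute, $E_0\!\left(\sum h_{\text{RI}}+\sum v_{\text{RI}}\right)=n^{r-2}\,E_0(H_{\text{TI}})$, and since $x\notin L$ this is at least $n^{r-2}\cdot\Omega(1/n^3)=\Omega(n^{r-5})$. Combining, $\bra{\psi_{c,c'}}H\ket{\psi_{c,c'}}\geq 4n^r(r-1)+\Omega(n^{r-5})$, as claimed.

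The step I expect to be the main obstacle is the third paragraph: one must verify carefully both that no inter-slice edge carries any $h_{\text{RI}}$ or $v_{\text{RI}}$ weight and that the left/right (and up/down) orientation singled out by the loop numbering is globally consistent within each slice, so that the operator induced on a slice is a genuine, possibly reflected, copy of $H_{\text{TI}}$ rather than an inhomogeneous mixture of $h_{\text{TI}}$ and $Sh_{\text{TI}}S$ — which would not be governed by \cref{thm:GI}.
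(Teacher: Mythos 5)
Your proof is correct and follows essentially the same route as the paper: split on whether the two copies of tiles are directed along the same dimension (penalized via $h_{\text{copy}}$), and otherwise decompose the lattice into $n^{r-2}$ disjoint 2D slices whose induced Hamiltonians are copies of $H_{\text{TI}}$, then invoke \cref{thm:GI}. You supply more detail than the paper's brief sketch (the exact $h_{\text{copy}}$ edge count, the disposal of residual classical/EPR violations, and the orientation-consistency of the slice Hamiltonians), but the underlying argument is the same.
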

\begin{proof}
    First, we must handle the case where $c$ and $c'$ both have 1D chains pointing in the same direction. This would incur a penalty of $n^rr$ from the $h_{\text{copy}}$ term alone, which would clearly imply the desired lower bound. Next, we focus on the case where they point in different directions. We let the 1D chains in the first tiling represent the horizontal rows of each 2D slice and those of the second tiling represent vertical rows. In addition, we let the order of the numberings define the left, right, up and down directions of the slices. In this way we can embed $n^{r-2}$ 2D translation-invariant Hamiltonians. Since $x \not\in L$, by \cref{thm:GI}, these terms contribute an energy penalty of $\Omega(n^{r-5})$.

\end{proof}
To complete the proof it remains to deal with the non-tile states but these can easily be handled since the Hamiltonian is diagonal in the tile Hilbert spaces.
\begin{lemma} [Restatement of \cref{lemma:soundness}]
    Let $L$ be a language in $\qmaexp$ and let $x \in \{0,1\}^*$ be a problem instance. Let $H = R(L,x,r)$. If $x \not \in L$, then $E_0(H) \geq 4n^r(r-1) + 1$.
\end{lemma}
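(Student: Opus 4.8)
The plan is to reduce to \emph{tile states} and then stitch together the case analysis that has already been carried out in the preceding claims. The starting point is that $H=R(L,x,r)$ acts diagonally on the tile registers $\mcH_{T_1}\otimes\mcH_{T_2}\otimes\mcH'_{T_1}\otimes\mcH'_{T_2}$ of every site: each $2$-body term $h^{u,v}$ is a sum of operators of the form (a diagonal projector on some subset of the four tile registers of $u$ and $v$) tensored with an operator on the remaining $\mcH_{\sigma_1}\otimes\mcH_{\sigma_2}\otimes\mcH'_{\sigma_1}\otimes\mcH'_{\sigma_2}\otimes\mcH_{\text{2D}}$ factors. Hence $H$ commutes with every single-site tile-basis projector, so it is block diagonal with blocks indexed by global classical tilings $(c,c')\in(T_1^{N}\times T_2^{N})^2$, and the least energy in block $(c,c')$ equals $\min_{\ket\phi}\bra{\psi_{c,c'}}H\ket{\psi_{c,c'}}$. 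Therefore $E_0(H)=\min_{c,c'}\min_{\ket\phi}\bra{\psi_{c,c'}}H\ket{\psi_{c,c'}}$, and it suffices to prove $\bra{\psi_{c,c'}}H\ket{\psi_{c,c'}}\geq 4n^r(r-1)+1$ for every tile state.

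Next I would run through the cases. For each of the two copies, a classical tiling is exactly one of the following mutually exclusive and exhaustive types: (i) not looped; (ii) looped with a turn; (iii) looped and turn-free but not uniformly directed; (iv) uniformly directed but not numbered consistently; (v) numbered consistently. If $c$ \emph{or} $c'$ is of type (i)--(iv), then \cref{claim:prop1}, \cref{claim:noprop1prop2}, the claim penalizing non-uniformly-directed loops, and the claim penalizing inconsistently numbered tilings give $\bra{\psi_{c,c'}}H\ket{\psi_{c,c'}}\geq 4n^r(r-1)+m$ with $m\in\{1,4,8,8\}$, which in particular yields the claimed bound. If instead $c$ and $c'$ are both of type (v), then \cref{claim:noprop1noprop2} gives $\bra{\psi_{c,c'}}H\ket{\psi_{c,c'}}\geq 4n^r(r-1)+\Omega(n^{r-5})$, and this is at least $4n^r(r-1)+1$ once $n$ exceeds a suitable constant $n_0$ — which is available since, by \cref{thm:GI}, we may take $f(x)=n\geq n_0$. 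Combining the cases with the block-diagonal reduction above gives $E_0(H)\geq 4n^r(r-1)+1$.

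The only thing to be careful about in this assembly is that no energy is double-counted: each cited claim obtains its bound from a \emph{disjoint} collection of the positive-semidefinite terms of $H$ — the $h_{\text{loop}}$ and $h'_{\text{loop}}$ terms supply the $4n^r(r-1)$ floor through \cref{eq:h1lb}, while the additional $+m$ comes either from the three-site EPR/tile gadget sitting at the defect or from the embedded $2$D translation-invariant terms — so the estimates compose cleanly. The real work is not in this step but in the individual claims: showing that a turn forces either an illegal four-cycle of tiles or a monogamy violation of the type bounded in \cref{fact:eprlb}, that a non-uniformly-directed loop always contains the bad $2\times 2$ block, that inconsistent numbering is caught by the ``different colors imply equal number'' rule, and that consistently numbered tilings reduce to \cref{thm:GI}. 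If I had to single out the part of the remaining argument most likely to need extra care, it would be confirming that these five cases are genuinely exhaustive and that the coordinate-specific pictures used in the turn and direction arguments really do carry over to arbitrary neighbouring configurations as asserted.
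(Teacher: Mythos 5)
Your proof is correct and takes essentially the same route as the paper's: the paper likewise uses the diagonality of $H$ on the tile registers to decompose an arbitrary state into tile states and then invokes the preceding chain of claims, stating the case analysis only implicitly as ``tile state energies which we have already lower bounded.'' Your more explicit version is faithful to that argument, and your parenthetical care about whether $\Omega(n^{r-5})$ from \cref{claim:noprop1noprop2} actually clears the additive $+1$ (it does not manifestly do so for $r\le 4$, a point the paper also glosses over) is the only place where the underlying claims, rather than your assembly of them, would need further attention.
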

\begin{proof}
    First we can write an arbitrary state as a superposition of tile states: $\ket{\zeta} = \sum_i \alpha_i \ket{c_i} \ket{\phi_i}$ where $c_i \in T_1^{2N} \times T_2^{2N}$. Note that $\bra{c_i} \bra{\phi_i} H \ket{c_j} \ket{\phi_j} = 0$ for $c_i \neq c_j$. This is because all terms are diagonal on the Hilbert space of the classical tiles. Therefore, we have $\bra{\zeta} H \ket{\zeta}=\sum_i \abs{\alpha_i}^2 \bra{c_i} \bra{\phi_i} H \ket{c_i} \ket{\phi_i}$. This is an affine combination of tile state energies which we have already lower bounded by $4n^r(r-1) + 1$. Thus, the energy itself is also bounded from below by $4n^r(r-1) + 1$.
\end{proof}

\section{Open boundary conditions}
So far we have only considered the case where the Hamiltonian has periodic boundary conditions. It turns out that the same construction also works for open boundary conditions. Our method of embedding directed stripes still works in this case except now instead of closed loops the stripes form spin chains with open boundary conditions. This leaves the sites at the ends of the chain with one unpaired qubit that can still form an EPR pair with another site; however, this would require introducing a turn. The energy penalty for having a turn outweighs the energy bonus of having one more same-colored neighbor and so it is optimal to leave the qubit unpaired. Thus, the same construction can once again be used to embed a 2D translation-invariant non-reflection-invariant Hamiltonian with the exception that this Hamiltonian now has open boundary conditions. To complete the result, it remains to show the equivalent of \cref{thm:GI} in the case of open boundary conditions. The proof of this statement is shown in \cref{app:open}.

\section{Conclusion}
In this work, we have resolved the complexity for rotation-invariant Hamiltonians with constant lattice dimension, but it still remains interesting to better understand the complexity at even higher lattice dimensions. For instance, we know that as $r \rightarrow \infty$ the ground state becomes a product state, but how fast does it converge? Consider the problem where the lattice length is now fixed, and the lattice dimension is given as input. Is there a small enough promise gap for which this problem is quantumly hard? Another direction to consider is to study more general permutation symmetries. In some sense, the rotation-invariant Hamiltonian problem interpolates between systems with comparatively low symmetry in the one-dimensional case to highly symmetric systems as the lattice dimension increases. It would be an interesting question to generalize this interpolation and probe whether there exists a complexity phase transition with respect to some symmetry parameter.
\section*{Acknowledgements}
JN is supported by the  National Science Foundation Graduate Research Fellowship Program under Grant No. DGE  2236417.
%%
%% Bibliography
%%

%% Please use bibtex, 

\bibliography{bibliography}

\appendix
\section{Constructing the 2D translation-invariant Hamiltonian}
\label{app:2d}
\subsection{Periodic boundary conditions}
\label{app:periodic}
In this section we give a proof sketch of the following theorem:
\begin{theorem}[Restatement of \cref{thm:GI}]
          Let $L$ be a $\qmaexp$-complete language. There exists an efficiently computable function $f: \{0,1\}^* \rightarrow \mathbb{Z}$ and 2-local positive semidefinite Hamiltonian terms $h_{\text{TI}}$ and $v_{\text{TI}}$ with the following properties:
    \begin{enumerate}
        \item $f(x)$ is a multiple of $3$ and $f(x)/3$ is prime. Furthermore, $\log f(x)=O(\poly(|x|))$ and $f$ is computable in $O(\poly(|x|))$ time.
        \item Let $f(x) = n \geq n_0$ for some constant $n_0$ and a given problem instance $x \in \{0,1\}^*$. For an $n \times n$ 2D lattice with periodic boundary conditions, let $E_h$ be the set of ordered pairs of horizontal neighbors and let $E_v$ be the set of ordered pairs of vertical neighbors. Consider the Hamiltonian $H_{\text{TI}}=\sum_{(u,w) \in E_h} h_{\text{TI}}^{u,w} + \sum_{(x,y) \in E_v} v_{\text{TI}}^{x,y}$. 
        \begin{enumerate}
            \item If $x \in L$ then $E_0(H_{\text{TI}}) \leq O(n^{-k})$ for an arbitrarily large constant $k$.
        \item If $x \not\in L$ then $E_0(H_{\text{TI}}) \geq \Omega(1/n^3)$
        \end{enumerate}
    \end{enumerate} 
\end{theorem}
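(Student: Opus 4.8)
The plan is to port the translation-invariant clock-and-computation construction of \cite{v009a002} from a one-dimensional spin chain to a two-dimensional ``spacetime'' layout, exploiting the fact that here we are allowed \emph{two} distinct nearest-neighbor terms, $h_{\text{TI}}$ horizontally and $v_{\text{TI}}$ vertically, rather than a single one. Each site gets a Hilbert space of constant dimension (independent of $n$ and $x$) holding a Turing-machine tape symbol, a head state, clock/counter symbols, a witness slot, and ancillary computation registers. I would let $v_{\text{TI}}$ be a propagation term forcing the configuration encoded by each row to be the one-step successor of the configuration in the row directly below it, and let $h_{\text{TI}}$ enforce that each individual row is a syntactically legal tape/head configuration --- these are exactly the ``well-formedness'' checks applied along the chain in the 1D construction. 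Both terms are sums of projectors and $\tfrac12(\ket{a}-\ket{b})(\bra{a}-\bra{b})$-type propagation pieces, hence positive semidefinite. Because the vertical boundary conditions are periodic, the only globally consistent assignments are \emph{cyclic} computation histories whose period divides $n$; and any such cyclic history is an exact zero-energy state of every propagation and well-formedness term --- there is no distinguished first or last time step to create boundary frustration --- which is precisely why the 2D construction can be made nearly frustration-free, unlike the $\Omega(1)$-frustrated 1D version.

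The embedded computation is the standard $\qmaexp$ verification wrapped in a reset loop: the encoded machine (i) reads the binary representation of the side length $n$ off the counter region of its tape, (ii) decodes from $n$ both the original instance $x$ and a running-time bound, (iii) lays out a witness register and simulates the $\qmaexp$ verifier $V$ on $(x,\ket{\psi})$ --- which fits since $n$ is exponential in $|x|$ while $V$ runs in time $2^{\poly(|x|)}$ --- (iv) applies a single local rejection-penalty projector to the output qubit of $V$ in the one row of the cycle holding the verifier's answer, and (v) uncomputes and restores the initial tape pattern before looping. The function $f$ is chosen, following \cite{v009a002}, so that $n=f(x)$ is a multiple of $3$ with $n/3$ prime, $\log f(x)=O(\poly(|x|))$, $f$ is computable in $\poly(|x|)$ time, and the binary expansion of $n$ encodes $x$; such an $n$ exists in any appropriate interval and is obtained via an efficient primality test exactly as in prior translation-invariant constructions. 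Primality of $n/3$ enters soundness: the only possible periods of a cyclic configuration are then $1,3,n/3,n$, and the short periods are excluded directly by the well-formedness terms, so a legal cyclic history must run the full program.

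For completeness, suppose $x\in L$ and fix a witness $\ket{\psi}$ accepted by $V$ with probability $1-2^{-\Theta(\poly(|x|))}$, obtained by $\qmaexp$ error reduction. The 2D history state --- the tensor product over rows of the encoded time-$t$ configuration with the witness slot populated by $\ket{\psi}$ --- satisfies all $h_{\text{TI}}$ and $v_{\text{TI}}$ propagation and well-formedness terms exactly, so its total energy equals the expectation of the rejection-penalty term, which is at most $2^{-\Theta(\poly(|x|))}=O(n^{-k})$ for any desired constant $k$ by tuning the error-reduction parameter. Hence $E_0(H_{\text{TI}})\le O(n^{-k})$.

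The hard part will be soundness, and this is where I expect the main obstacle. It splits into two steps. First, clock-soundness: by Kitaev's projection/geometric lemma together with a reproof, in the 2D periodic setting, of the translation-invariant clock analysis of \cite{v009a002}, any state of sufficiently small energy must be close to the span of legal cyclic computation histories --- one must show that the well-formedness plus propagation terms admit no low-energy ``junk'' states, that proper sub-cycles carry $\Omega(1)$ energy (using primality of $n/3$), and that the encoded machine cannot deviate from the prescribed verify-then-reset program. The genuine difficulty, exactly as in \cite{v009a002}, is that all terms are position-independent, so illegal local patterns cannot simply be ``addressed'': handling this requires that paper's careful legal-configuration and unitary-labeling bookkeeping, now adapted to a grid with two term types. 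Second, conditioned on closeness to legal histories: since $x\notin L$, every witness is rejected by $V$ with probability $\ge 1-2^{-\Theta(\poly(|x|))}$, so the rejection penalty has expectation $\Omega(1/n)$ on the legal-history subspace (the history-state weight on the answer row is $\Theta(1/n)$); feeding this together with the $\Omega(1/n^2)$ effective spectral gap of the propagation Hamiltonian into a Kitaev-type argument, exactly as in \cite{v009a002}, yields the claimed bound $E_0(H_{\text{TI}})=\Omega(1/n^3)$. Finally, I would separately verify property~1 --- that $f$ with the stated divisibility, primality, size and encoding properties exists and is poly-time computable --- which, as noted above, follows by selecting $n$ from an appropriate interval and invoking a deterministic primality test.
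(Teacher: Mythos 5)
Your proposal diverges from the paper's construction in a way that creates a genuine gap. You use the vertical direction as time: each row encodes a Turing-machine configuration and $v_{\text{TI}}$ is a propagation term forcing each row to be the successor of the row below it. But $v_{\text{TI}}$ is a \emph{two-body} term acting on two vertically adjacent sites in a single column, and $h_{\text{TI}}$ acts only within a single row. A propagation constraint for a two-qubit gate of the verifier (or for a head-moving Turing-machine transition, which reads and writes two adjacent cells) must jointly correlate two cells of row $t$ with the corresponding two cells of row $t+1$; that is an irreducibly four-body constraint on a $2\times 2$ plaquette. Classical tiling reductions get around this because classical information can be copied and re-checked by separate horizontal and vertical domino constraints, but quantum amplitudes cannot be cloned and re-verified this way, so the standard propagation projector $\frac{1}{2}(\ketbra{a}{a}+\ketbra{b}{b}-\ketbra{a}{b}-\ketbra{b}{a})$ for a two-qubit gate does not decompose into a sum of two-body column terms plus two-body row terms. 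You would need an additional mechanism (intermediate half-step rows, serializing each gate across several time slices, etc.) and you do not supply one. Secondary issues in the same spirit: with full vertical translation invariance you cannot address ``the one row holding the verifier's answer'' except through the configuration itself, and your exclusion of cyclic histories of period $n/3$ (which is not short, so it is not killed by well-formedness alone) is asserted rather than argued.

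The paper sidesteps all of this by never using the second dimension as time. It takes the 2D classical translation-invariant tiling of \cite{v009a002}, which for prime side length forces exactly one distinguished row containing exactly one distinguished site, blows each tile up to a $3\times 3$ block so that the side length can be $3p$ with $p$ prime, and then embeds the entire \emph{one-dimensional} clock-pointer construction of \cite{v009a002} inside that single distinguished row; time is the position of the clock pointer along the row, exactly as in 1D, and the rest of the lattice carries only classical tiling data. The distinguished starting site is what removes the $1/2$ frustration of the 1D cyclic construction --- not the absence of temporal boundary frustration, as you suggest. If you want to pursue the spacetime layout you must first resolve the plaquette-locality obstruction; otherwise the row-embedding route is the one that actually closes the argument.
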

Much of the proof will directly utilize techniques from \cite{v009a002}. When needed, a brief summary of these ideas is given but we direct the interested reader to \cite{v009a002} for the full details. The main idea of the construction is to use the 2D translation-invariant tiles to embed a 1D chain with a designated starting tile. This allows us to avoid the $1/2$ additive penalty that is required in the 1D construction when there is no designated starting tile (see section 6 ``The quantum cycle" of \cite{v009a002}). Fortunately, the construction in tables 5 and 6 of section 4.1 of \cite{v009a002} accomplish exactly this. This construction is quite complicated and so we only present the end result. The last piece to handle is that this construction requires the grid length to be prime while our tiling rules require it to be a multiple of three. This can easily be remedied by simulating each tile in the 2D construction with a $3\times3$ grid of nine tiles. We explain each of these steps in more detail below. 

We start by describing how to construct the function $f$ referenced in the theorem statement. \cite{v009a002} show that given $x$ there is a randomized algorithm $a$ running in expected time $O(\poly |x|)$ to find a prime number $p$ such that the $1/3$ most significant digits represent $x$ and $\log p = O(\poly(|x|))$. Using this result, we simply define $f$ as $f:=3\cdot a(x)$. 

Next, to construct the 2D translation-invariant Hamiltonian, we start by using the below result.
\begin{lemma}[see section 4.1 of \cite{v009a002}]
    There exists a set of translation-invariant, non-reflection-invariant tiling rules involving a constant number of total tile types on an $n \times n$ 2D grid with periodic boundary conditions such that when $n$ is prime exactly one row must contain tiles of the form $\horizontaltile$ and $\centertile$ and no other row can contain either of these tiles. In addition, exactly one site in this row must contain $\centertile$ and all other sites in this row must contain $\horizontaltile$. This configuration is depicted in \cref{fig:2dtiles}.
\end{lemma}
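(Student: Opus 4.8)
The plan is to invoke the two-dimensional tiling construction of Gottesman and Irani (Section~4.1, Tables~5--6 of \cite{v009a002}), which was built for exactly this purpose: singling out a unique row, and a unique site within it, on a periodic grid whose side length is prime. First I would set up the tile alphabet. Besides the two ``chain'' tiles $\horizontaltile$ and $\centertile$ that will populate the distinguished row, the construction uses a bounded number of auxiliary tiles --- ``counter'' tiles carrying an integer value, ``carry/reset'' tiles recording where the counter wraps, and generic ``background'' tiles for the remaining cells. Crucially, all adjacency rules are translation invariant but treat the two lattice axes asymmetrically; this is what lets the rule set be non-reflection-invariant, which we need downstream in order to break rotation symmetry.

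The core of the argument is a counting-modulo-$n$ step. The rules force the auxiliary tiles to implement a counter that advances by one along a designated axis, so that following it once around the torus brings it back to its initial value; hence the counter's effective modulus $m$ must divide $n$. An adjacency rule of the construction forbidding two equal counter values from sitting next to each other rules out the degenerate case $m = 1$, so for prime $n$ we are forced to $m = n$: the counter runs through $0, 1, \dots, n-1$ exactly once around each loop, and there is a \emph{unique} cell --- hence a unique row --- carrying the value $0$. The remaining local rules then propagate this structure: the distinguished row is forced to consist entirely of $\horizontaltile$ tiles except at the single column picked out by the same primality argument applied along the horizontal direction, where a $\centertile$ is forced; and every other row is forbidden from containing either chain tile. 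This is exactly the configuration drawn in \cref{fig:2dtiles}.

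Finally I would check both directions. Completeness is routine: for prime $n$ one writes down the intended tiling --- the counter running $0, \dots, n-1$ along each axis, the distinguished row filled with $\horizontaltile$ and a single $\centertile$, background tiles elsewhere --- and verifies it violates no adjacency rule. Soundness carries the real weight, and it is the step I expect to be the main obstacle: one must rule out \emph{every} alternative tiling of the $n \times n$ torus, which in Gottesman and Irani's construction demands a careful case analysis showing the counter cannot wrap more than once per loop and that the carry/reset tiles cannot be used to stitch together an inconsistent pattern. Since this verification is long and the tile set intricate, I would follow the excerpt and present only the end result, citing Tables~5--6 of \cite{v009a002} for the full rule set and its analysis; the only genuinely new task is confirming that their hypotheses ($n$ prime, periodic boundaries, axis asymmetry) match our setting and that the marked structure they produce is the one our subsequent 1D embedding requires.
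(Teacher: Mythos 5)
Your proposal takes essentially the same approach as the paper: the paper offers no independent proof of this lemma, but simply cites the construction in Tables~5--6 of Section~4.1 of \cite{v009a002} and presents only the end result, which is exactly what you propose to do. Your added sketch of the internal mechanism (a period that must divide $n$, with period~$1$ locally forbidden, so primality forces a unique distinguished row and site) is a reasonable gloss on why the Gottesman--Irani construction works, and since you explicitly defer the full rule set and soundness analysis to the citation, this matches the paper's treatment.
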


\begin{figure}
    \centering
    \includegraphics[width=0.2\columnwidth]{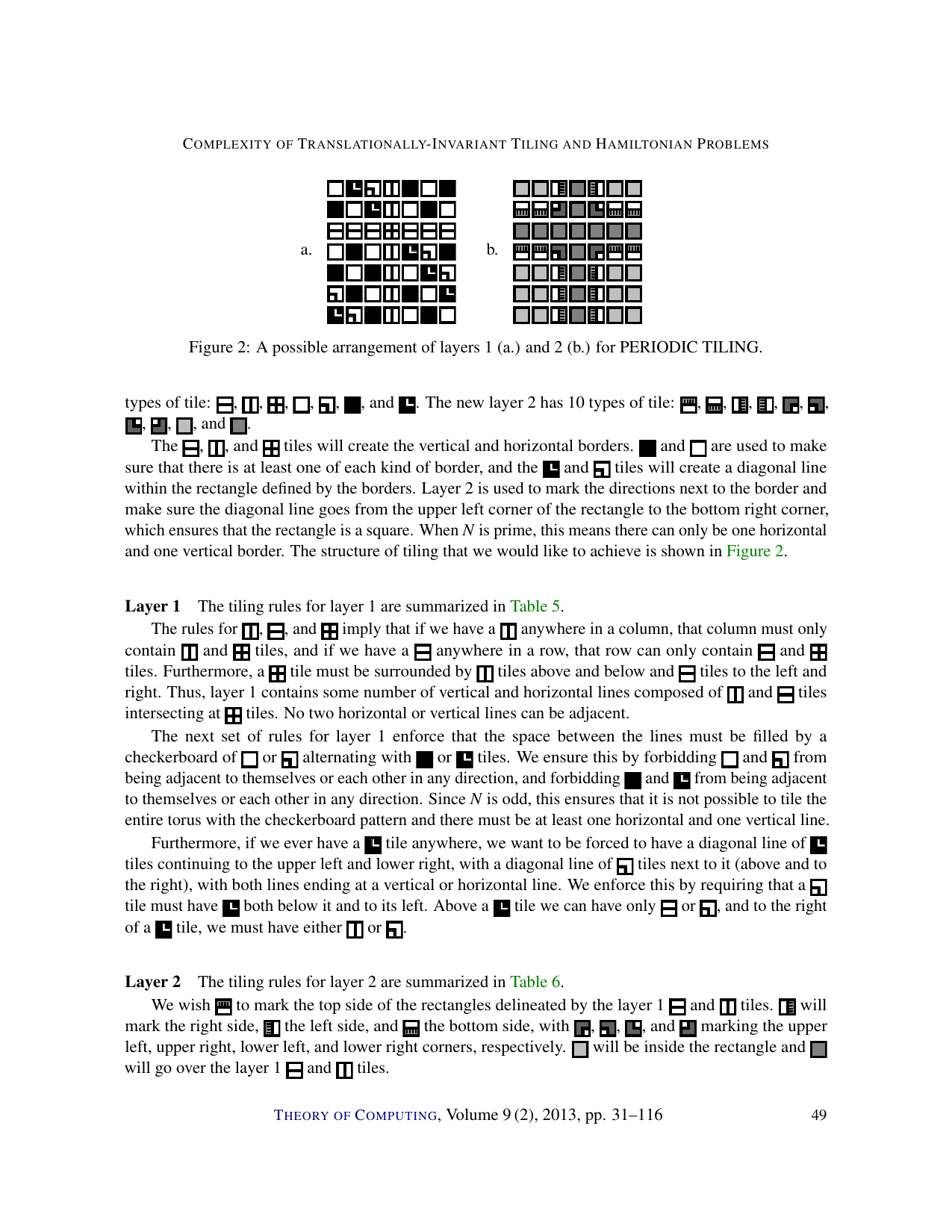}
    \caption{An example of an allowed configuration for a given set of translation-invariant tiling rules on a 2D grid defined in \cite{v009a002}. The key feature is that when the side length is a prime number, exactly one row can contain the $\horizontaltile$ and $\centertile$ tiles and one site within this row can contain the $\centertile$ tile. The above image is directly reproduced from \cite{v009a002} under CC-BY 3.0.}
    \label{fig:2dtiles}
\end{figure}

Since this construction only works for prime $n$ but our lattice length must be a multiple of three, it is necessary to replace each tile with a $3 \times 3$ grid of tiles that serve the same function as the original. First, for each tile in the original tiling, a corresponding center tile is defined as $\middletile$. Then, the following tiles and rules are added. $\toptile$ must be placed above $\middletile$. Similarly, $\bottomtile$ must be placed below $\middletile$, $\lefttile$  placed to the left of it and $\righttile$ placed to the right of it. Now to fix the corners in place we enforce that $\toplefttile$ must be placed to the left of $\toptile$, $\toprighttile$ placed to the right of $\toptile$, $\bottomlefttile$ placed to the left of $\bottomtile$ and $\bottomrighttile$ placed to the right of $\bottomtile$. In other words, the center tile must always be surrounded by the border tiles. The reciprocal of each of these rules is also included. This enforces that the border tiles must always be accompanied by the center tile in the appropriate location. This results in \cref{fig:3x3sim} being the only allowed configuration. The last thing to resolve is to ensure that these $3 \times 3$ grids are aligned with each other. To do this, we must regulate which border tiles can be placed next to those of a different $3\times3$ grid. We enforce that only $\righttile$ type tiles are allowed to the left of $\lefttile$ type tiles. We incorporate the same rule for the other sides as well: only $\lefttile$ type tiles are allowed to the right of $\righttile$ type tiles, only $\bottomtile$ type tiles are allowed above $\toptile$ type tiles and only $\toptile$ type tiles are allowed below $\bottomtile$ type tiles.

\begin{figure}
    \centering
    \includegraphics[width=0.15\columnwidth]{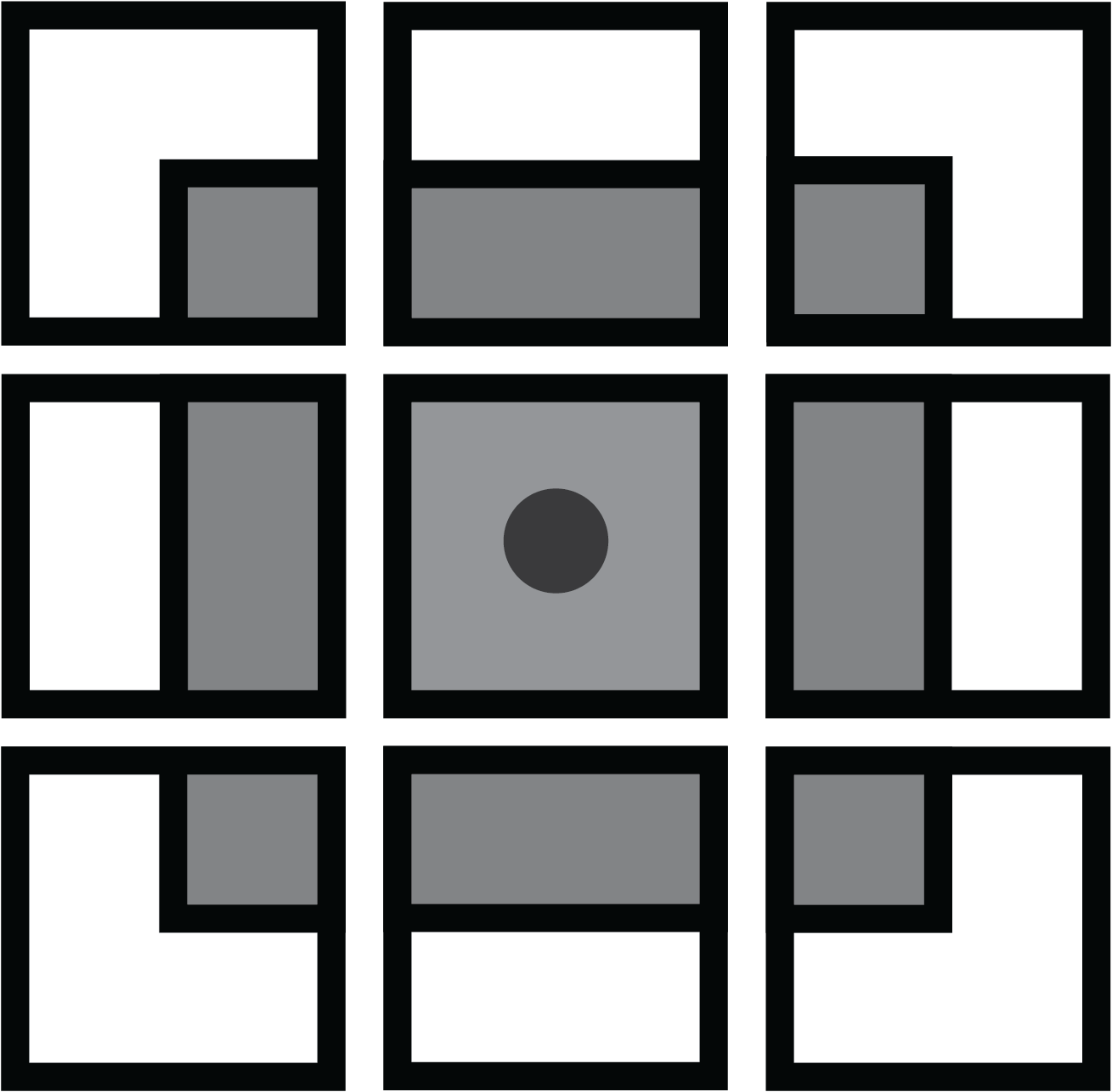}
    \caption{This is the only allowed configuration for the subtiles associated with each $3 \times 3$ grid. Each such grid represents a tile in the original tiling system.}
    \label{fig:3x3sim}
\end{figure}

Now we can incorporate an original rule between tiles by applying it to their corresponding border tiles. This will exactly simulate the original but with each site replaced by a $3\times3$ grid. This results in an $n \times n$ grid where $n = 3p$ and $p$ is a prime number.

With this tiling in hand, the 1D construction can now be embedded into the $3 \times 3$ grids associated with the $\horizontaltile$ and $\centertile$ tiles. In particular, only the middle row of the $3 \times 3$ grids associated with each $\horizontaltile$ and $\centertile$ tile is used for the chain (i.e. only the tiles $\lefttile$, $\middletile$, and $\righttile$). In addition, the $\middletile$ tile associated with the $\centertile$ tile is used to mark the left endpoint of the 1D construction.

It remains to construct a 1D translation-invariant Hamiltonian on a $f(x)$-length spin chain with the desired ground-state energy properties. To accomplish this \cite{v009a002} is directly used and is briefly summarized here. The main idea is to use the Hamiltonian terms to simulate a quantum Turing machine where each site in the length-$f(x)$ spin chain represents a different cell of the Turing machine tape except for two sites which are used to mark the boundaries. The goal of the first part of the Turing machine is to infer $x$ from the length of the tape and write it on the tape. The second part of the Turing machine is quantum and uses $x$ as the input along with a quantum witness to execute the $\qmaexp$ verification algorithm for $L$.

We now discuss the first part of the Turing machine, which we denote as $M_{BC}$. $M_{BC}$ is a purely classical Turing machine implementing a binary counter. By incrementing a clock pointer from one side of the tape to the other, we can ensure that $M_{BC}$ is run for exactly $f(x) - 3$ steps. Recall that $f(x) = 3p$ where $1/3$ of the most significant digits of $p$ is $x$. Therefore, by using a sufficiently slow binary counter, it is possible to ensure that $x$ is always written on the tape at the end of the $f(x) -3$ steps. 

Now that $x$ is on the tape, we can run the quantum verification algorithm on $x$ along with an arbitrary quantum witness. The verifier is also allowed a total of $f(x)-3$ timesteps. Notice this is more strict than $\qmaexp$, which allows the verifier $2^{\poly |x|}$. $\qmaexp$ can be reduced to this case by a standard padding argument where $x$ is padded by zeros to have length $\poly(|x|)$. Finally, an energy penalty is applied if the verifier does not accept. If the verifier accepts with probability $1-\epsilon$ in the $x \in L$ case then the ground-state energy will be upper bounded by $\epsilon/n^2$. In order to drive $\epsilon \leq O(n^{-k})$ for an arbitrarily large constant, it is possible to use witness amplification. This incurs a $O(k\log n)$ overhead in the verifier's runtime \cite{kitaev}, which can easily be accommodated by padding. In our construction, we would like to set $k = O(r)$ where $r$ is the dimension, but would prefer not to have the verification algorithm depend on $r$. Therefore, we can instead set $k=\log n$ where $\log n \geq O(r)$ for some $n\geq n_0$ since $r$ is a parameter of the problem and does not scale with $n$. Importantly, even though the number of rounds of witness amplification depends on $n$, our Hamiltonian term still does not depend on $n$ since $n$ is deduced from the length of the lattice and then given as input to the verification algorithm. For this construction \cite{v009a002} also show that in the $x \not\in L$ case the ground-state energy is lower bounded by $\Omega(1/n^3)$. 

Finally, each term in the construction is of one of two forms called Type 1 terms: $\ketbra{ab}{ab}$  and Type II terms: 
\begin{align}
    \frac{1}{2}(\ketbra{ab}{ab} + \ketbra{cd}{cd} - \ketbra{ab}{cd} - \ketbra{cd}{ab}).
\end{align}
Both types are positive semidefinite and so the overall Hamiltonian term is also positive semidefinite.

\subsection{Open boundary conditions}
\label{app:open}
An equivalent theorem to \cref{thm:GI} is also true in the case of open boundary conditions. The construction is also inspired by \cite{v009a002}. First, we define the tiles $\leftbc$, $\bottombc$, $\rightbc$, and $\blank$. The idea is then to introduce the following tiling rules: no tile is allowed to the left of or below $\leftbc$, no tile is allowed below $\bottombc$ and no tile is allowed to the right of or below $\rightbc$. Additionally, the only tile that is allowed to the right of $\leftbc$ is $\bottombc$ and the only tile that is allowed to the left of $\rightbc$ is also $\bottombc$. Finally, $\blank$ is not allowed to the left or right of $\bottombc$ and the only tile allowed above $\leftbc$, $\bottombc$ and $\rightbc$ is $\blank$. This results in the configuration depicted in \cref{fig:2dopenbc}. The 1D Hamiltonian can then be embedded into the bottom row of the 2D grid where the $\leftbc$ and $\rightbc$ tiles denote the endpoints.

\begin{figure}
    \centering
    \includegraphics[width=0.2\columnwidth]{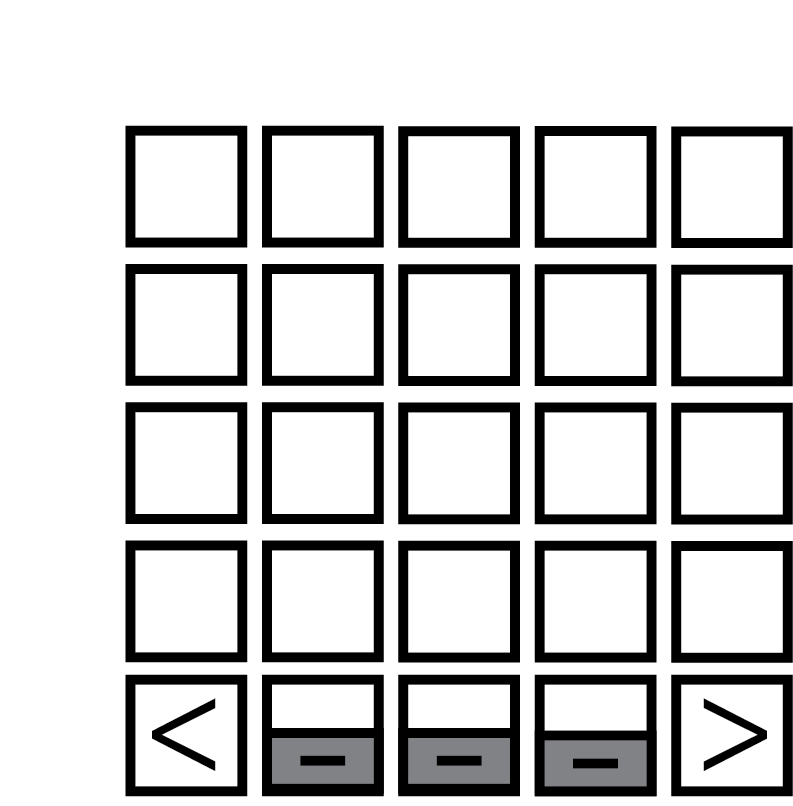}
    \caption{The only allowed configuration for the given set of translation-invariant tiling rules on a 2D grid with open boundary conditions.}
    \label{fig:2dopenbc}
\end{figure}
\end{document}